\newtheorem{theorem}{Theorem}[section]
\newtheorem{lemma}[theorem]{Lemma}
\newtheorem{proposition}[theorem]{Proposition}
\newcommand{\bd}{\boldsymbol{d}}
\newcommand{\be}{\boldsymbol{e}}
\newcommand{\br}{\boldsymbol{r}}
\newcommand{\bs}{\boldsymbol{s}}
\newcommand{\bt}{\boldsymbol{t}}
\newcommand{\bu}{\boldsymbol{u}}
\newcommand{\bv}{\boldsymbol{v}}
\newcommand{\bw}{\boldsymbol{w}}
\newcommand{\bx}{\boldsymbol{x}}
\newcommand{\by}{\boldsymbol{y}}
\newcommand{\bA}{\boldsymbol{A}}
\newcommand{\bB}{\boldsymbol{B}}
\newcommand{\bC}{\boldsymbol{C}}
\newcommand{\bH}{\boldsymbol{H}}
\newcommand{\bI}{\boldsymbol{I}}
\newcommand{\bK}{\boldsymbol{K}}
\newcommand{\bP}{\boldsymbol{P}}
\newcommand{\bQ}{\boldsymbol{Q}}
\newcommand{\bM}{\boldsymbol{M}}
\newcommand{\bR}{\boldsymbol{R}}
\newcommand{\bX}{\boldsymbol{X}}
\newcommand{\bU}{\boldsymbol{U}}
\newcommand{\bV}{\boldsymbol{V}}
\newcommand{\bW}{\boldsymbol{W}}
\newcommand{\bY}{\boldsymbol{Y}}
\newcommand{\bbeta}{\boldsymbol{\beta}}
\newcommand{\blambda}{\boldsymbol{\lambda}}
\newcommand{\bphi}{\boldsymbol{\phi}}
\newcommand{\bmu}{\boldsymbol{\mu}}
\newcommand{\bOmega}{\boldsymbol{\Omega}}
\newcommand{\bSigma}{\boldsymbol{\Sigma}}
\title{\bf A Generic Path Algorithm for Regularized Statistical Estimation}
\author{
Hua Zhou \\
Department of Statistics \\
North Carolina State University \\
Raleigh, NC 27695-8203 \\
E-mail: hua\_zhou@ncsu.edu \\
\and
Yichao Wu \\
Department of Statistics \\
North Carolina State University \\
Raleigh, NC 27695-8203 \\
E-mail: wu@stat.ncsu.edu
}
\begin{document}
\maketitle

\baselineskip=20pt

\begin{abstract}
Regularization is widely used in statistics and machine learning to prevent overfitting and gear solution towards prior information. In general, a regularized estimation problem minimizes the sum of a loss function and a penalty term. The penalty term is usually weighted by a tuning parameter and encourages certain constraints on the parameters to be estimated. Particular choices of constraints lead to the popular lasso, fused-lasso, and other generalized $l_1$ penalized regression methods. Although there has been a lot of research in this area, developing efficient optimization methods for many nonseparable penalties remains a challenge. In this article we propose an exact path solver based on ordinary differential equations (EPSODE) that works for any convex loss function and can deal with generalized $l_1$ penalties as well as more complicated regularization such as inequality constraints encountered in shape-restricted regressions and nonparametric density estimation. In the path following process, the solution path hits, exits, and slides along the various constraints and vividly illustrates the tradeoffs between goodness of fit and model parsimony. In practice, the EPSODE can be coupled with AIC, BIC, $C_p$ or cross-validation to select an optimal tuning parameter. Our applications to generalized $l_1$ regularized generalized linear models, shape-restricted regressions, Gaussian graphical models, and nonparametric density estimation showcase the potential of the EPSODE algorithm.
\vspace{.1in}   \\
{\bf Keywords:} Gaussian graphical model, generalized linear model, lasso, log-concave density estimation, ordinary differential equations, quasi-likelihoods, regularization, shape restricted regression, solution path
\vspace{.1in}
\end{abstract}

\section{Introduction}
\label{sec:intro}

Regularization is a frequently used framework in statistics. Examples include the lasso regression \citep{Tibshirani96Lasso,ChenDonohoSaunders01BasisPursuit} and the $l_1$ penalized generalized linear models (GLMs) among many others. For both the lasso and the $l_1$ penalized GLMs, efficient solution path algorithms have been proposed to ease the tuning of the regularization parameter \citep{OsbornePresnellTurlach00LassoAlgo,EfronHastieIainTibshirani04LARS,ParkHastie07GLMLasso}. Yet extension to other more general settings is nontrivial and has been an active research area.

In this article, we consider a general regularization framework
\begin{align}
    \min_{\bbeta \in \mathbb{R}^p} f(\bbeta) + \rho \|\bV \bbeta - \bd\|_1 + \rho \|\bW \bbeta - \be\|_+, \label{eqn:pathalgo-obj}
\end{align}
for which we propose an efficient exact path solver based on ordinary differential equations (EPSODE).
Here $f: \mathbb{R}^{p} \mapsto \mathbb{R}$ can be any convex, smooth loss function of $\bbeta\in\mathbb{R}^p$, where $p>0$ is the dimensionality of the parameters. For any vector $\bv=(v_i)$,  $\|\bv\|_1 = \sum_i |v_i|$ denotes its $l_1$ norm and $\|\bv\|_+ = \sum_i \max \{v_i,0\}$ is the sum of positive parts of its components. The EPSODE provides the exact solution path to (\ref{eqn:pathalgo-obj}) as the tuning parameter $\rho$ varies.

\subsection{Generality of (\ref{eqn:pathalgo-obj})}

The generality of (\ref{eqn:pathalgo-obj}) is two-fold. First  $f$ can by any convex loss function. For example, it can be the negative log-likelihood function of GLMs, negative quasi-likelihood, the exponential loss function of the AdaBoost \citep{friedman00additive}, or many other frequently used loss functions in statistics and machine learning. Second we allow $\bV$ and $\bW$ to be any regularization matrices of $p$ columns. This leads to broad applications. In particular, the first regularization term $\rho \|\bV \bbeta - \bd\|_1$ encourages equality constraints among parameters $\bbeta$. When $\rho$ is large enough, the minimizer $\bbeta(\rho)$ of (\ref{eqn:pathalgo-obj}) satisfies $\bV \bbeta(\rho) = \bd$. For instance, when $\bV$ is the identity matrix and $\bd={\bf 0}$, it recovers the well-known lasso regression \citep{Tibshirani96Lasso,ChenDonohoSaunders01BasisPursuit}, which encourages sparsity of the estimates. When
\begin{align*}
    \bV &= \left( \begin{array}{rrrrr}
    -1 & 1 &  \\
     & \ddots & \ddots  \\
     & & -1 & 1
    \end{array}  \right)
\end{align*}
and $\bd={\bf 0}$, it corresponds to the fused-lasso penalty \citep{Tibshirani05FusedLasso}, which leads to smoothness among neighboring regression coefficients. As we will show later, more complicated equality constraints can be incorporated with properly designed $\bV$ and $\bd$. On the other hand, the second regularization term $\rho \|\bW \bbeta - \be\|_+$ enforces regularization by inequality relations among regression coefficients. For large enough $\rho$, the minimizer $\bbeta(\rho)$ satisfies $\bW \bbeta(\rho) \le \be$. For instance, setting $\bW$ as the negative identity matrix and $\be = {\bf 0}$ encourages nonnegativity of the estimates, as required in nonnegative least squares problems \citep{LawsonHanson87LSBook}. In the isotonic regression \citep{RobertsonWrightDykstra88Book,SilvapullePranab05CSIBook}, the estimates have to be nondecreasing. This can be achieved by the regularization matrix
\begin{align*}
    \bW &= \left( \begin{array}{rrrrr}
    1 & -1 &  \\
     & \ddots & \ddots  \\
     & & 1 & -1
    \end{array}  \right)
\end{align*}
and $\be = {\bf 0}$. More complicated constraints that occur in shape-restricted regression and nonparametric regressions also can be incorporated as we demonstrate in later examples.

In certain applications, both equality and inequality regularizations are required. In that case,  as shown in Section \ref{sec:exact-penalty-method}, at a large but finite $\rho$, the minimizer $\bbeta(\rho)$ coincides with the solution to the following constrained optimization problem
\begin{eqnarray}
& \min& f(\bbeta)   \label{linconsobj}  \\
&\mbox{s.t.} &  \bV \bbeta = \bd \mbox{ and } \bW \bbeta \le \be. \nonumber
\end{eqnarray}
Consequently EPSODE solves the linearly constrained estimation problem (\ref{linconsobj}) as a by-product. In this case, path following commences from the unconstrained solution $\text{argmin} f(\bbeta)$ and ends at the constrained solution to (\ref{linconsobj}).

\subsection{Previous Work}

Several path algorithms have been devised for special cases of the general regularization problem (\ref{eqn:pathalgo-obj}). For example, the homotopy method \citep{OsbornePresnellTurlach00LassoAlgo} and the least angle regression (LARS) procedure \citep{EfronHastieIainTibshirani04LARS} handle lasso penalized least squares problem. The solution path generated is piecewise linear and illustrates the tradeoffs between goodness of fit and sparsity. \cite{RossetZhu07Path} give sufficient conditions for a solution path to be piecewise linear and expand its applications to a wider range of loss and penalty functions. Recently \cite{TibshiraniTaylor10GenLasso} devise a dual path algorithm for generalized $l_1$ penalized least squares problems, which is problem (\ref{eqn:pathalgo-obj}) with $f$ quadratic but without the second inequality regularization term. \citet{ZhouLange11LSPath} consider (\ref{eqn:pathalgo-obj}) in full generality for quadratic $f$. All these work concerns regularized linear regression for which the solution path is piecewise linear. Several attempts have been made to path following for regularized GLMs for which the solution path is no longer piecewise linear. \cite{ParkHastie07GLMLasso} propose a predictor-corrector approach to approximate the lasso path for GLMs. \cite{Wu10ODELasso} presents an ordinary differential equation-based path algorithm which delivers the {\it exact} path for lasso penalized GLMs. \cite{Friedman08GPS} derives an approximate path algorithm for any convex loss regularized by a separable, but not necessarily convex penalty. Here a penalty function is called separable if its Hessian matrix is diagonal. The separability restriction on the penalty term excludes many important problems encountered in real applications.

Our proposed approach generalizes previous work in several aspects. First, it works for any convex loss (or criterion) function. Second, it allows for any type of regularization in terms of linear functions of parameters, equality or inequality. Equality constrained regularizations include lasso, fused-lasso and generalized $l_1$ penalty for example. Inequality constrained regularizations are required in  shape-restricted regression and nonparametric log-concave density estimation. Last but not least, it is an {\it exact} path algorithm.

\subsection{A Motivating Example}

For illustration, we consider a merger and acquisition (M\&A) data set studied in \citep{FanMaityWangWu11ManA}. This data set constitutes $n=1,371$ US companies with a binary response variable indicating whether the company becomes a leveraged buyout (LBO) target ($y_i=1$) or not ($y_i=0$). Seven covariates (1. cash flow, 2. cash, 3. long term investment, 4. market to book ratio, 5. log market equity, 6. tax, 7. return on S\&P 500 index) are recorded for each company. There have been intensive studies on the effects of these factors on the probability of a company being a target for strategic mergers. Exploratory  analysis using linear logistic regression shows no significance in most covariates.

To explore the possibly nonlinear effects of these quantitative covariates, the varying-coefficient model \citep{HastieTibshirani93VaryingCoeff} can be adopted here. We discretize each predictor into, say, 10 bins and fit a logistic regression. The first bin of each predictor is used as the reference level and effect coding is applied to each discretized covariate. The circles (o) in Figure \ref{fig:MandA-estimates} denote the estimated coefficients for each bin of each predictor and hint at some interesting nonlinear effects. For instance, the chance of being an LBO target seems to monotonically decease with market-to-book ratio and be quadratic as a function of log market equity. Regularization can be utilized to borrow strength between neighboring bins and gear solution towards clearer patterns. To illustrate the flexibility of the regularization scheme (\ref{eqn:pathalgo-obj}), we apply cubic trend filtering to 5 covariates (cash flow, cash, long term investment, tax, return on S\&P 500 index), impose the monotonicity (non-increasing) constraint on the `market-to-book ratio' covariate, and enforce the concavity constraint on the `log market equity' covariate. This can be achieved by minimizing a regularized negative logistic log-likelihood of form
\begin{align*}
    - l(\bbeta_1,\ldots,\bbeta_7) + \rho \sum_{j \ne 4,5} \|\bV_j \bbeta_j\|_1 + \rho \sum_{j=4,5} \|\bW_j \bbeta_j\|_+,
\end{align*}
where $\bbeta_j$ is the vector of regression coefficients for the $j$-th discretized covariate. The matrices in the regularization terms are specified as
\begin{align*}
    \bV_j &= \left( \begin{array}{rrrrrrrr}
    -1 & 2 & -1 \\
    1 & -4 & 6 & -4 & 1 \\
      & 1 & -4 & 6 & -4 & 1 &   \\
      & & & \ddots & \ddots & \ddots  \\
    &  & 1 & -4 & 6 & -4 & 1 &  \\
      & & & & & -1 & 2 & -1
    \end{array}  \right) \text{ for } j=1,2,3,6,7, \\
    \bW_4 &= \left( \begin{array}{rrrrrr}
    -1 & 1 &  \\
     & -1 & 1 &  \\
     & & \ddots & \ddots  \\
    &  &  & -1 & 1 &  \\
      & & & & -1 & 1
    \end{array}  \right),   \mbox{ and }  \\
    \bW_5 &= \left( \begin{array}{rrrrrrr}
    1 & -2 & 1 \\
     & 1 & -2 & 1  \\
     & & \ddots & \ddots & \ddots  \\
    &  &  & 1 & -2 & 1 & \\
      & & & & 1 & -2 & 1
    \end{array}  \right).
\end{align*}
The equality constraint regularization matrix $\bV_j$, $j=1,2,3,6,7$, penalizes the fourth order finite differences between the bin estimates. Thus, as $\rho$ increases, the coefficient vectors of covariates 1-3,6-7 tend to be piecewise cubic with two ends being linear, mimicking the natural cubic spline. This is one example of the polynomial trend filtering \citep{KimKohBoyd09TrendFiltering,TibshiraniTaylor10GenLasso}. Similar to semi-parametric regressions, regularizations in polynomial trend filtering `let the data speak for themselves'. In contrast, the bandwidth selection in semi-parametric regressions is replaced by parameter tuning in regularizations. The number and locations of knots are automatically determined by tuning parameter which is chosen according to model selection criteria. In a similar fashion, the coefficient vector gradually becomes monotone for covariate `market-to-book ratio' and concave for covariate `log market equity'. In addition, with $\rho$ large enough, we recover the corresponding constrained solution, which are shown by the crosses (+) on solid lines in Figure \ref{fig:MandA-estimates}. As noted above, our exact path algorithm delivers the whole solution path bridging from the unconstrained estimates (denoted by o) to the constrained estimates (denoted by +). For example, the dotted lines in Figure \ref{fig:MandA-estimates} is a snapshot of the solution at $\rho=0.6539$. Availability of the whole solution path renders model selection along the path easy. For instance the regularization parameter $\rho$ can be chosen by minimizing the cross-validation error or other model selection criteria such as AIC, BIC, or $C_p$. Figure \ref{fig:MandA-path} displays the solution path and the AIC and BIC along the path. It shows that both criteria favor the fully regularized solution, namely the constrained estimates. The whole solution path is obtained within seconds on a laptop using a {\sc Matlab} implementation of EPSODE.

The patterns revealed by the regularized estimates match some existing finance theories. For instance, a company with low cash flow is unlikely to be an LBO target because low cash flow is hard to meet the heavy debt burden associated with the LBO. On the other hand, company carrying a high cash flow is likely to possess a new technology. It is risky to acquire such firms because it is hard to predict their profitability. The tax reason is obvious from the regularized estimates. The more tax the company is paying, the more tax benefits from an LBO. Log of market equity is a measure of company size. Smaller companies are unpredictable in their profitability and extremely large companies are unlikely to be an LBO target because LBOs are typically financed with a large proportion of external debts. Interested readers are referred to \citep{ShivdasaniWang09} and references therein for  related theories on LBO.

This illustrative example demonstrates the flexibility of our novel path algorithm. First, it can be applied to any convex loss function. In this example, the loss function is the negative log-likelihood of a logistic model. Second, it works for complicated regularizations like polynomial trend filtering (equality constraints), monotonicity constraint, and concavity constraint. More applications will be presented in Section \ref{sec:examples} to illustrate the potential of EPSODE.

\begin{figure}
\begin{center}
$$
\begin{array}{cc}
\includegraphics[width=6in]{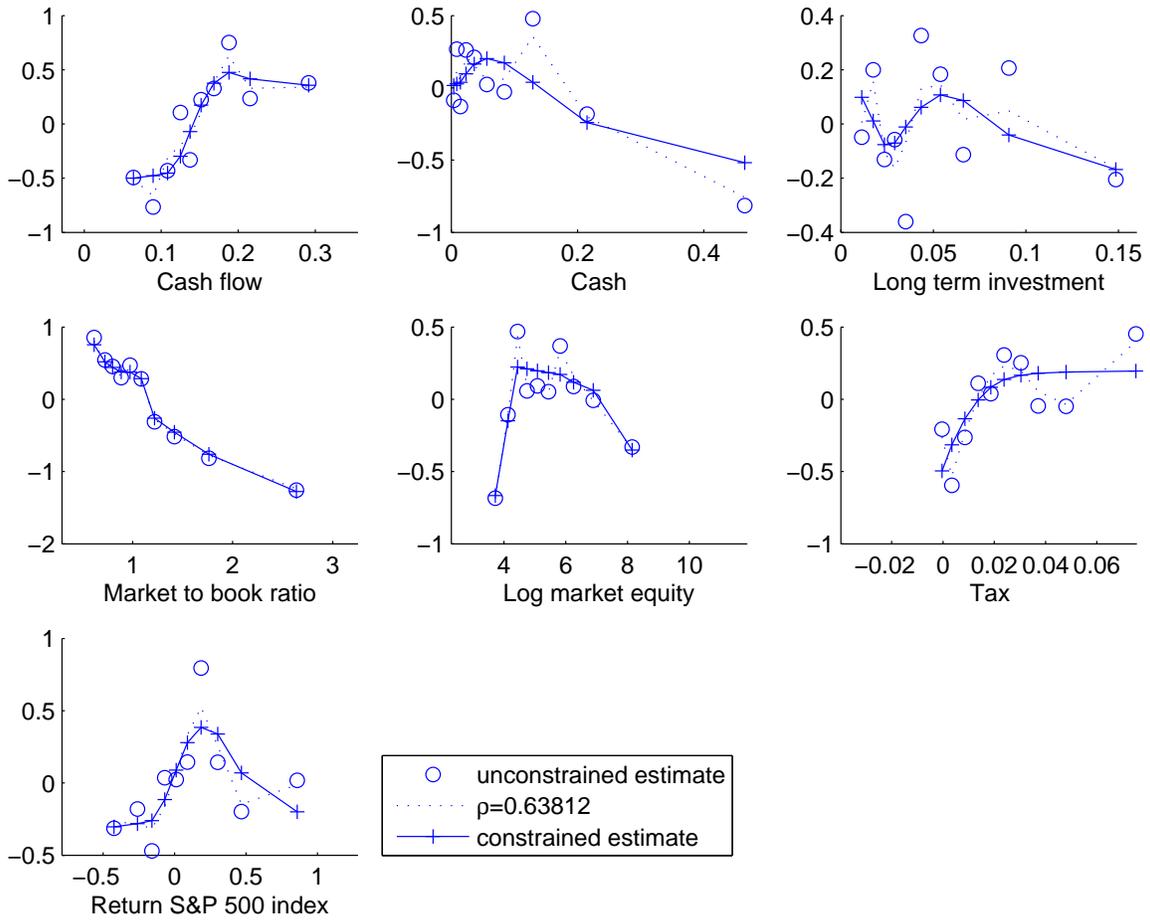}
\end{array}
$$
\caption{Snapshots of the path solution to the regularized logistic regression on the M\&A data set.}
\label{fig:MandA-estimates}
\end{center}
\end{figure}

\begin{figure}
\begin{center}
$$
\begin{array}{cc}
\includegraphics[width=2.5in]{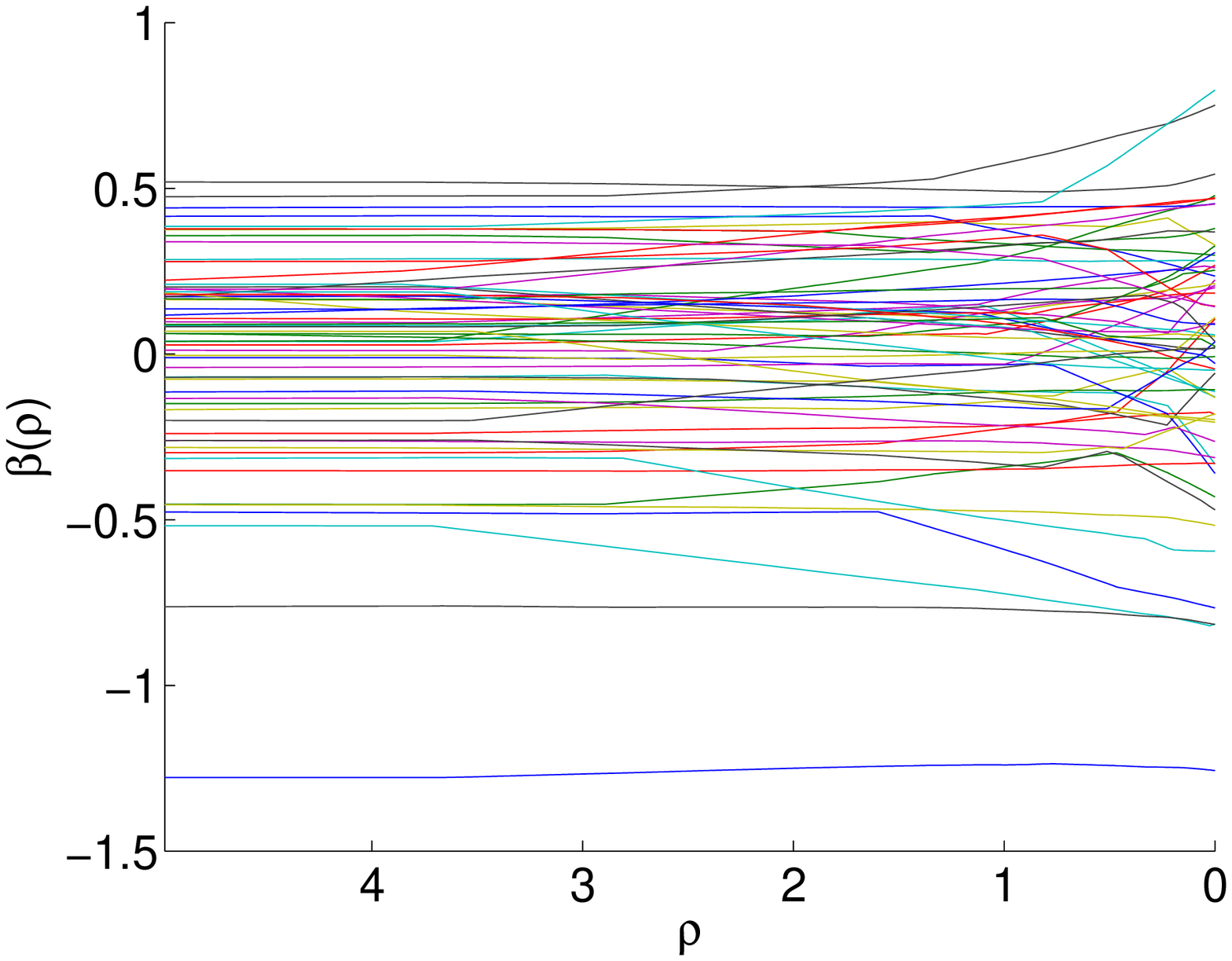} & \includegraphics[width=2.5in]{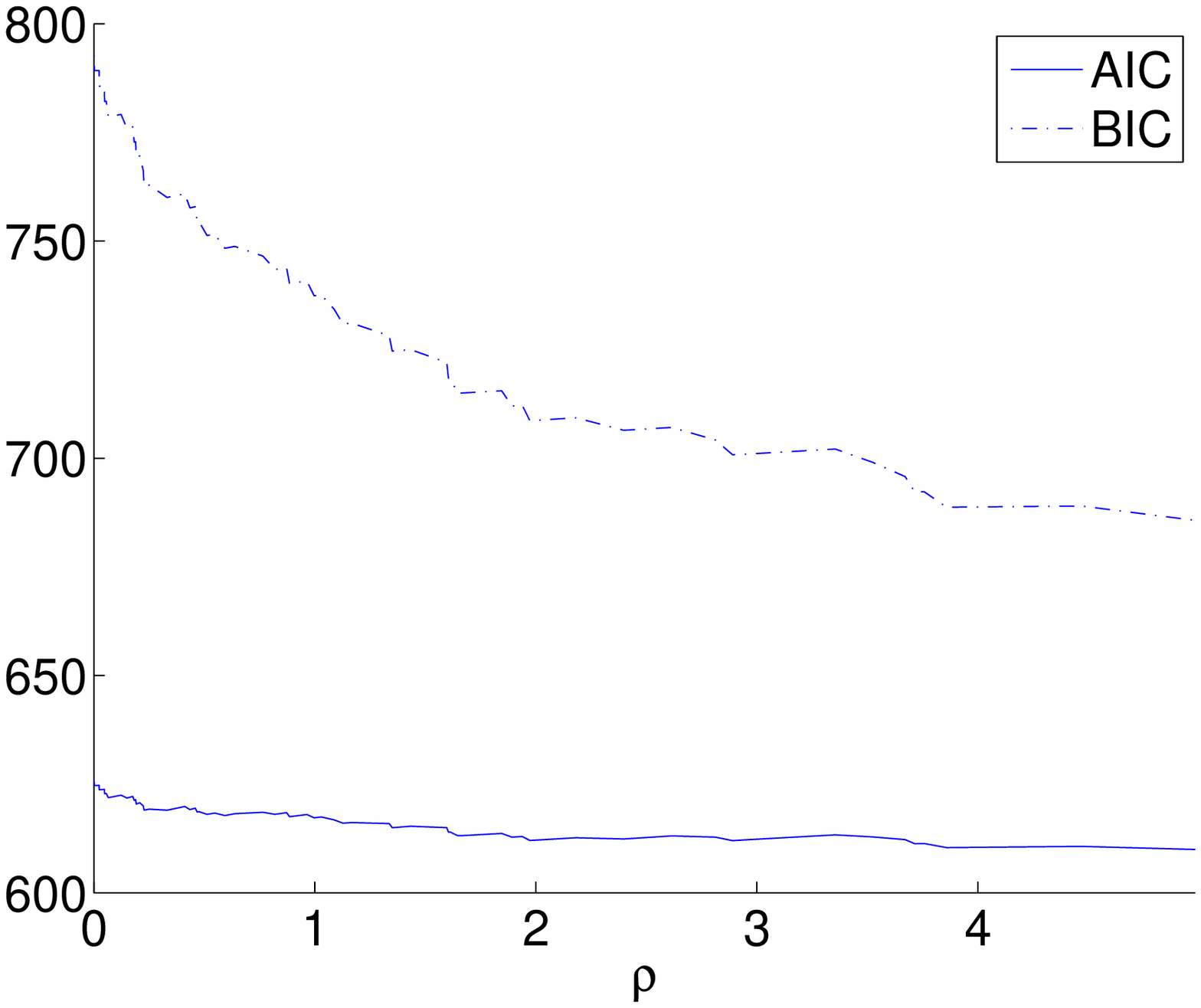}
\end{array}
$$
\caption{Solution and AIC/BIC paths of the regularized logistic regression on the M\&A data set.}
\label{fig:MandA-path}
\end{center}
\end{figure}

The rest of the paper is organized as follows. Section \ref{sec:exact-penalty-method} reviews the exact penalty method for optimization. Here the connections between constrained optimization and regularization in statistics are made clear. Section \ref{sec:path-algorithm} derives in detail the EPSODE algorithm for strictly convex loss function $f$. Its implementation via the sweep operator and ordinary differential equations are described in Section \ref{sec:implmentation}. An extension of EPSODE for $f$ convex but not necessarily strictly convex is discussed in Section \ref{sec:ext}. Section \ref{sec:dof} concerns model selection along the path. Section \ref{sec:examples} presents various applications of EPSODE. Finally, Section \ref{sec:conclusions} discusses the limitations of the path algorithm and hints at future generalizations.

\section{Exact Penalty Method for Convex Constrained Optimization}
\label{sec:exact-penalty-method}

Consider the convex program
\begin{eqnarray}
    &\min& f(\bx)  \nonumber \\
    &\text{s.t.}& g_i(\bx) = 0, 1 \le i \le r \label{convexprogram}\\
    & & h_j(\bx) \le 0, 1 \le j \le s, \nonumber
\end{eqnarray}
where the objective function $f$ is convex, equality constraint functions $g_i$ are affine, and the inequality constraint functions $h_j$ are convex. We further assume that $f$ and $h_j$ are smooth. Specifically we require that $f$ and $h_j$ are continuously twice differentiable. To fix notation, differential $df(\bx)$ is the row vector of partial derivatives of $f$ at $\bx$ and the gradient $\nabla f(\bx)$ is the transpose of $df(\bx)$. The Hessian matrix of $f(\cdot)$ is denoted by $d^2f(\bx)$.

Exact penalty method minimizes the function
\begin{eqnarray}
{\cal E}_{\rho}(\bx) & = & f(\bx)+\rho \sum_{i=1}^r |g_i(\bx)|+\rho \sum_{j=1}^s \max\{0,h_j(\bx)\}    \label{eqn:exact-pen-unconst}
\end{eqnarray}
for $\rho\geq0$. Classical results \citep[Theorems 6.9 and 7.21]{Ruszczynski06Book} state that for $\rho$ large enough, the solution to the  optimization problem (\ref{eqn:exact-pen-unconst}) coincides with the solution to the original constrained convex program (\ref{convexprogram}). This justifies the exact penalty method as one way to solve constrained optimization problems.

According to convex calculus \citep[Theorem 3.5]{Ruszczynski06Book}, the optimal point $\bx(\rho)$ of the function ${\cal E}_\rho(\bx)$ is characterized by the necessary and sufficient condition
\begin{eqnarray}
{\bf 0} & = & \nabla f(\bx) + \rho \sum_{i=1}^r s_i \nabla g_i(\bx) + \rho \sum_{j=1}^s t_j \nabla h_j(\bx) \label{path_stationary}
\end{eqnarray}
with coefficients satisfying
\begin{align}
   s_i \in \begin{cases}
   \{-1\} & g_i(\bx) < 0 \\
   [-1, 1] & g_i(\bx) = 0    \\
   \{1\} & g_i(\bx) > 0
   \end{cases}, \hspace{.5in} \mbox{ and } \hspace{0.5in} t_j \in \begin{cases}
   \{0\} & h_j(\bx) < 0 \\
   [0, 1] & h_j(\bx) = 0    \\
   \{1\} & h_j(\bx) > 0
   \end{cases}. \label{eqn:path-subgradient}
\end{align}
The sets defining possible values of $s_i$ and $t_j$ are the subdifferentials of the functions $|x|$ and $x_+ = \max\{x,0\}$. For path following to make sense, we require uniqueness and continuity of the solution $\bx(\rho)$ to (\ref{eqn:exact-pen-unconst}) as $\rho$ varies. The following lemma concerns the continuity of the solution path and is the foundation of our path algorithm.
\begin{lemma}
\label{lemma:continuity}
\begin{enumerate}
\item (Uniqueness) If ${\cal E}_\rho$ is strictly convex, then its minimizer $\bx(\rho)$ is unique.
\item (Continuity) If ${\cal E}_\rho$ is strictly convex and coercive over an open neighborhood of $\rho$, then the minimizer $\bx(\rho)$ is continuous at $\rho$.
\item (Continuity of $s_i$ and $t_j$) Furthermore, if the gradients $\{\nabla g_i(\bx): g_i(\bx)=0\} \cup \{\nabla h_j(\bx): h_j(\bx)=0\}$ of active constraints are linearly independent at the solution $\bx(\rho)$ over an open neighborhood of $\rho$, then the coefficient paths $s_i(\rho)$ and $t_j(\rho)$ are unique and continuous at $\rho$.
\end{enumerate}
\end{lemma}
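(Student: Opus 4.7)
The plan is to dispose of the three parts in order, with Part 3 carrying most of the weight.

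Part 1 is one line: if two distinct minimizers $\bx_1 \ne \bx_2$ of the strictly convex $\mathcal{E}_\rho$ existed, then $\mathcal{E}_\rho((\bx_1+\bx_2)/2) < \tfrac{1}{2}\mathcal{E}_\rho(\bx_1) + \tfrac{1}{2}\mathcal{E}_\rho(\bx_2) = \min \mathcal{E}_\rho$, a contradiction. Part 2 is a standard subsequence argument. For any $\rho_n \to \rho$, the bound $\mathcal{E}_{\rho_n}(\bx(\rho_n)) \le \mathcal{E}_{\rho_n}(\bx(\rho))$ together with joint continuity of $(\rho',\by)\mapsto \mathcal{E}_{\rho'}(\by)$ and coercivity on a neighborhood trap $\{\bx(\rho_n)\}$ in a bounded set. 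Any subsequential limit $\bx^\star$ satisfies $\mathcal{E}_\rho(\bx^\star) \le \mathcal{E}_\rho(\by)$ for every $\by$ by taking limits in the optimality inequality, and uniqueness from Part 1 forces $\bx^\star = \bx(\rho)$, so the whole sequence converges.

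Part 3 is where the real work is. First, for each index with $g_i(\bx(\rho)) \ne 0$ or $h_j(\bx(\rho)) \ne 0$ the subdifferential in (\ref{eqn:path-subgradient}) is a singleton ($\pm 1$, or $0$/$1$). Continuity of $\bx(\cdot)$ from Part 2 combined with continuity of $g_i, h_j$ preserves the strict sign in a neighborhood, so those $s_i(\rho'), t_j(\rho')$ are locally constant and hence continuous. For uniqueness at $\rho$ of the remaining, active-set components, I would rearrange (\ref{path_stationary}) to isolate the unknowns: the active-constraint gradients form the columns of a matrix mapping the unknown active-set coefficients onto a right-hand side built from $\nabla f(\bx(\rho))$ and the now-determined inactive contributions. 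Linear independence of these gradients is precisely full column rank, which forces the active-set coefficients at $\rho$ to be unique.

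For continuity of those active-set components, I would take any $\rho_n \to \rho$ and argue by subsequences. Since $(s(\rho_n), t(\rho_n))$ lies in the compact box $[-1,1]^r \times [0,1]^s$, it has subsequential limits $(s^\star, t^\star)$. Passing to the limit in the stationarity condition at $\rho_n$, using continuity of $\bx(\cdot)$ and of the gradients, yields stationarity at $\rho$ with $(s^\star, t^\star)$, while the subdifferential conditions (\ref{eqn:path-subgradient}) survive in the limit because the intervals $[-1,1]$ and $[0,1]$ are closed and the forced signs on inactive indices agree. Uniqueness from the previous step then forces $(s^\star, t^\star) = (s(\rho), t(\rho))$, so the full sequence converges. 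The main obstacle is the bookkeeping around active-set changes — a constraint active at $\bx(\rho)$ may become inactive at $\bx(\rho')$ with its subgradient pinned to $\pm 1$ or $0$/$1$ rather than determined by the linear system — but the subsequence-plus-uniqueness argument sidesteps this cleanly by never assuming the active set is locally constant; it only requires that the limit $(s^\star, t^\star)$ lie in the subdifferential at $\bx(\rho)$, which linear independence then pins to $(s(\rho), t(\rho))$.
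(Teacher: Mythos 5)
Your proposal is correct and takes essentially the same route as the paper: parts 1 and 2 match the paper's strict-convexity and coercivity-plus-subsequence arguments, and part 3 rests on the same key point that linear independence of the active-constraint gradients lets the stationarity condition (\ref{path_stationary}) be solved uniquely for the coefficients given $\bx(\rho)$. The only real difference is that your part 3 is more careful than the paper's one-line claim that continuity of $s_i(\rho)$ and $t_j(\rho)$ ``inherits'' from continuity of $\bx(\rho)$ --- your compact-box subsequence argument explicitly handles the possibility that the active set changes in a neighborhood of $\rho$, a point the paper glosses over.
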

\begin{proof} The uniqueness of minimum under strict convexity is well-known \citep{Ruszczynski06Book}. For continuity, suppose that the solution $\bx(\rho)$ is not continuous at $\rho$. Then there exists $\epsilon>0$ and a sequence $\rho_n \to \rho$ such that $\|\bx(\rho_n)-\bx(\rho)\|\ge \epsilon$ for all $n$. Since ${\cal E}_\rho$ is coercive, $\bx(\rho_n)$ is bounded and there exists a subsequence of $\bx(\rho_n)$ that converges to some point $\by$. Taking limits in the inequality ${\cal E}_{\rho_n}[\bx(\rho_n)] \le {\cal E}_{\rho_n}(\bx)$ shows that ${\cal E}_{\rho}(\by) \le {\cal E}_\rho(\bx)$ for all $\bx$, i.e., $\by = \bx(\rho)$. This contradicts with $\|\by - \bx(\rho)\| \ge \epsilon$. Therefore $\bx(\rho)$ is continuous at $\rho$. For the continuity of coefficients, under the linearly independence assumption, $s_i(\rho)$ and $t_j(\rho)$ can be uniquely solved by the stationarity condition (\ref{path_stationary}) given solution vector $\bx(\rho)$. Therefore continuity of $s_i(\rho)$ and $t_j(\rho)$ inherits from continuity of $\bx(\rho)$.
\end{proof}
\noindent
We remark that strict convexity only gives an easy-to-check sufficient condition for uniqueness and continuity; it is not a necessary condition. A convex but not strictly convex function can still have a unique minimum. The absolute value function $|x|$ is such an example.  When the loss function $f$ is strictly convex, then ${\cal E}_\rho$ is strictly convex for all $\rho \ge 0$ and by Lemma \ref{lemma:continuity} there exists a unique, continuous solution path $\{\bx(\rho): \rho \ge 0\}$.  In Section \ref{sec:path-algorithm} and \ref{sec:implmentation}, we derive the path algorithm assuming that $f$ is strictly convex. When $f$ is convex but not strictly convex, e.g., when $n<p$ in the least squares problems, the solutions at smaller $\rho$ may not be unique. In that case, it is still possible to obtain a solution path over the region of large $\rho$ where the minimum of ${\cal E}_\rho$ is unique. In Section \ref{sec:ext}, we extend EPSODE to the case $f$ is convex but may not be strictly convex. The third statement of Lemma \ref{lemma:continuity} implies that the active constraints ($g_i(\bx)=0$ or $h_j(\bx)=0$) with interior coefficients must stay active until the coefficients hit the end points of the permissible range, which in turn implies that the solution path is piecewise smooth. This allows us to develop a path following algorithm based on ODE.

\section{The Path Following Algorithm}
\label{sec:path-algorithm}

In this article, we specialize to the case where the constraint functions $g_i$ and $h_j$ are affine, i.e., the gradient vectors $\nabla g_i(\bx)$ and $\nabla h_j(\bx)$ are constant. This leads to the regularized optimization problem formulated as (\ref{eqn:pathalgo-obj}) by defining $g_i$ and $h_j$ as constraint residuals $g_i(\bx) = \bv_i^t \bx - d_i$ and $h_j(\bx) = \bw_j^t \bx - e_j$. In principle a similar path algorithm can be developed for the general convex program where the inequality constraint functions $h_j$ are relaxed to be convex. But that is beyond the scope of the current paper. In Sections \ref{sec:path-algorithm} and \ref{sec:implmentation}, we assume that the loss function $f$ is strictly convex. This assumption is relaxed in Section \ref{sec:ext}.

Our path following algorithm EPSODE works in a segment-by-segment fashion. Along the path we keep track of the following index sets determined by signs of constraint residuals
\begin{align}
   {\cal N}_{\text{E}} &= \{i: g_i(\bx) = \bv_i^t \bx - d_i < 0\}, \hspace{.5in} {\cal N}_{\text{I}} = \{j: h_j(\bx) = \bw_j^t \bx - e_j < 0\}  \nonumber  \\
   {\cal Z}_{\text{E}} &= \{i: g_i(\bx) = \bv_i^t \bx - d_i = 0\}, \hspace{.5in} {\cal Z}_{\text{I}} = \{j: h_j(\bx) = \bw_j^t \bx - e_j = 0\}  \label{eqn:set-config}  \\
   {\cal P}_{\text{E}} &= \{i: g_i(\bx) = \bv_i^t \bx - d_i > 0\}, \hspace{.5in} {\cal P}_{\text{I}} = \{j: h_j(\bx) = \bw_j^t \bx - e_j > 0\}  \nonumber.
\end{align}
Along each segment of the path, the set configuration is fixed. This is implied by the continuity of both the solution and coefficient paths established in  Lemma \ref{lemma:continuity}. Throughout this article, we call the constraints in ${\cal Z}_{\text{E}}$ or ${\cal Z}_{\text{I}}$ active and others inactive.

Next we derive the ODE for the solution $\bx(\rho)$ on a fixed segment. Suppose we are in the interior of a segment. Let $\bx(\rho)$ be the solution of (\ref{eqn:exact-pen-unconst}) indexed by the penalty parameter $\rho$ and $\bx(\rho+\Delta \rho)$ the solution when the penalty is increased by an infinitesimal amount $\Delta \rho > 0$. Then the difference $\Delta \bx(\rho) = \bx(\rho+\Delta \rho) - \bx(\rho)$ should minimize the increase in optimal objective value. That is, to the second order, $\Delta \bx$ is the solution to
\begin{eqnarray}
    &\min_{\Delta \bx}& {\cal E}_{\rho+\Delta\rho} (\bx + \Delta \bx) - {\cal E}_\rho(\bx) \nonumber \\
    &\approx& d f(\bx) \cdot \Delta \bx + \frac 12 \Delta \bx^t \cdot d^2f(\bx) \cdot \Delta \bx \nonumber  \\
    & & + (\rho+\Delta \rho) \cdot \left[ - \sum_{i \in {\cal N}_{\text{E}}} \bv_i + \sum_{i \in {\cal P}_{\text{E}}} \bv_i + \sum_{j \in {\cal P}_{\text{I}}} \bw_j \right] \cdot \Delta \bx \label{eqn:deltax} \\
    & & + \Delta \rho \cdot \left[- \sum_{i \in {\cal N}_{\text{E}}} g_i(\bx) + \sum_{i \in {\cal P}_{\text{E}}} g_i(\bx) + \sum_{j \in {\cal P}_{\text{I}}} h_j(\bx) \right]  \nonumber \\
    &\text{s.t.}& \bv_i^t \cdot \Delta \bx = 0, i \in {\cal Z}_{\text{E}}, \nonumber \\
    & & \bw_j^t \cdot \Delta \bx = 0, j \in {\cal Z}_{\text{I}}.  \nonumber
\end{eqnarray}
Note that the active constraints have to be kept active since the set configuration is fixed along this segment by Lemma \ref{lemma:continuity}. This is why we have these two sets of equality constraints. To ease notational burden, we define
\begin{eqnarray}
    \bH (\bx) &=& d^2f(\bx) \label{eqn:def-Hu}  \\
    \bu_{\bar{\cal Z}} &=& - \sum_{i \in {\cal N}_{\text{E}}} \bv_i + \sum_{i \in {\cal P}_{\text{E}}} \bv_i + \sum_{j \in {\cal P}_{\text{I}}} \bw_j. \nonumber
\end{eqnarray}
This leads to the corresponding Lagrange multiplier problem
\begin{eqnarray*}
    & & \left( \begin{array}{cc} \bH(\bx) & \bU^t_{{\cal Z}} \\ \bU_{{\cal Z}} & {\bf 0} \end{array} \right) \left( \begin{array}{c} \Delta \bx \\ \blambda_{{\cal Z}} \end{array} \right) = \left( \begin{array}{c} - \nabla f(\bx) - (\rho + \Delta \rho) \bu_{\bar{\cal Z}} \\ {\bf 0} \end{array} \right),
\end{eqnarray*}
where the rows of the matrix $\bU_{{\cal Z}}$ are the constant differentials, $\bv_i^t$, $i \in {\cal Z}_{\text{E}}$, and $\bw_j^t$, $j \in {\cal Z}_{\text{I}}(\bx)$, of the active constraint functions. Denoting the inverse of matrix as
\begin{eqnarray*}
    \left( \begin{array}{cc} \bH(\bx) & \bU^t_{{\cal Z}} \\ \bU_{{\cal Z}} & {\bf 0} \end{array} \right)^{-1} = \left( \begin{array}{cc} \bP(\bx) & \bQ(\bx) \\ \bQ^t(\bx) & \bR(\bx) \end{array} \right),
\end{eqnarray*}
where
\begin{eqnarray}
\bP(\bx) & = & \bH^{-1}(\bx) - \bH^{-1}(\bx) \bU_{{\cal Z}}^t \left[ \bU_{{\cal Z}} \bH^{-1}(\bx) \bU_{{\cal Z}}^t \right]^{-1} \bU_{{\cal Z}} \bH^{-1}(\bx) \nonumber \\
\bQ(\bx) & = & \bH^{-1}(\bx) \bU_{{\cal Z}}^t \left[ \bU_{{\cal Z}} \bH^{-1}(\bx) \bU_{{\cal Z}}^t \right]^{-1} \label{eqn:PQR} \\
\bR(\bx) & = & - \left[ \bU_{{\cal Z}} \bH^{-1}(\bx) \bU_{{\cal Z}}^t(\bx) \right]^{-1},    \nonumber
\end{eqnarray}
the solution of the difference vector $\Delta \bx$ is
\begin{eqnarray*}
    \Delta \bx &=& - \bP(\bx) [\nabla f(\bx) + (\rho + \Delta \rho) \bu_{\bar{\cal Z}}]   \\
    &=& - \bP(\bx) [\nabla f(\bx) + \rho \bu_{\bar{\cal Z}}(\bx) + \Delta \rho \cdot \bu_{\bar{\cal Z}}]    \\
    &=& - \bP(\bx) [- \rho \bU^t_{{\cal Z}} \br_{{\cal Z}} + \Delta \rho \cdot \bu_{\bar{\cal Z}}].
\end{eqnarray*}
Note $\bP(\bx) \bU^t_{{\cal Z}} = {\bf 0}$. Therefore $\Delta \bx = - \Delta \rho \cdot \bP(\bx) \bu_{\bar {\cal Z}}$. This gives the direction for the infinitesimal update of solution vector $\bx(\rho)$. Taking limit in $\Delta \rho$ leads to the following key result for developing the path algorithm.
\begin{proposition}
\label{prop:sol-ode}
Within interior of a path segment with set configuration (\ref{eqn:set-config}), the solution $\bx(\rho)$ satisfies an ordinary differential equation (ODE)
\begin{eqnarray}
    & & \frac{d\bx(\rho)}{d\rho} = - \bP(\bx) \bu_{\bar{\cal Z}} \label{eqn:sol-ode}
\end{eqnarray}
where the matrix $\bP(\bx)$ and vector $\bu_{\bar{\cal Z}}$ are defined by (\ref{eqn:PQR}) and (\ref{eqn:def-Hu}).
\end{proposition}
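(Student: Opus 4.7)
My plan is to derive the ODE by differentiating the Karush--Kuhn--Tucker stationarity conditions in $\rho$ and then solving the resulting linear system with the block-inverse formulas (\ref{eqn:PQR}). By Lemma \ref{lemma:continuity}, within the interior of a path segment the index sets (\ref{eqn:set-config}) are constant and the subgradient coefficients for the inactive constraints are pinned at $\pm 1$ or $0$. Consequently the contribution of inactive constraints to (\ref{path_stationary}) is exactly $\rho\,\bu_{\bar{\cal Z}}$, while the interior subgradient coefficients on the active constraints can be absorbed into a single vector $\bmu(\rho)$ of Lagrange multipliers. The stationarity condition then reads
\begin{align*}
\nabla f(\bx(\rho)) + \rho\, \bu_{\bar{\cal Z}} + \bU^t_{{\cal Z}}\bmu(\rho) = {\bf 0},
\end{align*}
coupled with the frozen active-set equations $\bU_{{\cal Z}}\bx(\rho) = \bc_{{\cal Z}}$, where $\bc_{{\cal Z}}$ collects the relevant entries of $\bd$ and $\be$.

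Next I would upgrade continuity to $C^1$ smoothness. Define
\begin{align*}
F(\bx,\bmu,\rho) = \bigl(\nabla f(\bx) + \rho\, \bu_{\bar{\cal Z}} + \bU^t_{{\cal Z}}\bmu,\; \bU_{{\cal Z}}\bx - \bc_{{\cal Z}}\bigr).
\end{align*}
Its Jacobian in $(\bx,\bmu)$ is precisely the KKT block matrix used in the derivation preceding the proposition. Strict convexity of $f$ gives $\bH(\bx) = d^2 f(\bx) \succ {\bf 0}$, and linear independence of the active constraint gradients makes $\bU_{{\cal Z}}$ of full row rank; together these guarantee invertibility of the block matrix with inverse $(\bP,\bQ,\bR)$ as in (\ref{eqn:PQR}). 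The implicit function theorem then produces $C^1$ branches $\bx(\rho),\bmu(\rho)$ on the interior of the segment, which agree with the continuous branches supplied by Lemma \ref{lemma:continuity} by uniqueness.

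Finally, differentiating $F \equiv {\bf 0}$ in $\rho$ gives the linear system
\begin{align*}
\begin{pmatrix} \bH(\bx) & \bU^t_{{\cal Z}} \\ \bU_{{\cal Z}} & {\bf 0} \end{pmatrix} \begin{pmatrix} \dot\bx(\rho) \\ \dot\bmu(\rho) \end{pmatrix} = \begin{pmatrix} -\bu_{\bar{\cal Z}} \\ {\bf 0} \end{pmatrix},
\end{align*}
and reading off the top block of the product with the inverse in (\ref{eqn:PQR}) yields $\dot\bx(\rho) = -\bP(\bx)\bu_{\bar{\cal Z}}$, which is exactly (\ref{eqn:sol-ode}). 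The main obstacle is the smoothness step: Lemma \ref{lemma:continuity} only delivers continuity, so differentiability must be justified separately before one is entitled to differentiate in $\rho$. The implicit function theorem applied to $F$ handles this cleanly under the standing strict convexity and linear independence hypotheses; everything else is linear-algebraic bookkeeping already packaged in (\ref{eqn:PQR}), so once smoothness is secured the ODE drops out immediately.
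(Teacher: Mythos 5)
Your proof is correct, but it takes a genuinely different route from the paper's. The paper derives the ODE by a variational perturbation argument: it posits that the increment $\Delta \bx = \bx(\rho+\Delta\rho)-\bx(\rho)$ minimizes, to second order, the increase ${\cal E}_{\rho+\Delta\rho}(\bx+\Delta\bx)-{\cal E}_\rho(\bx)$ subject to the active constraints remaining active, solves the resulting equality-constrained quadratic program through its Lagrange system, simplifies using the stationarity condition at $\rho$ together with the identity $\bP(\bx)\bU_{\cal Z}^t={\bf 0}$ to obtain $\Delta\bx=-\Delta\rho\cdot\bP(\bx)\bu_{\bar{\cal Z}}$, and then passes to the limit. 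You instead differentiate the KKT stationarity system directly in $\rho$, after first invoking the implicit function theorem to upgrade the continuity supplied by Lemma \ref{lemma:continuity} to $C^1$ smoothness. Both arguments funnel through the same KKT block matrix and its inverse (\ref{eqn:PQR}), and both need strict convexity of $f$ plus full row rank of $\bU_{\cal Z}$; your frozen active-set equation $\bU_{\cal Z}\bx(\rho)=\bc_{\cal Z}$ is exactly the paper's constraints $\bv_i^t\Delta\bx=0$ and $\bw_j^t\Delta\bx=0$ in differentiated form, so the two linear systems agree term by term. What your route buys is rigor at precisely the point where the paper is weakest: the concluding ``taking limit in $\Delta\rho$'' tacitly assumes the path is differentiable on the segment interior, whereas your implicit-function-theorem step establishes this from the stated hypotheses. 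What the paper's route buys is the interpretation of the path direction as the minimizer of the marginal increase in the penalized objective, and, as a by-product, the expression for the active-constraint coefficients that it reuses in Proposition \ref{prop:coeff}.
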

\noindent
Note that the right hand side of (\ref{eqn:sol-ode}) is a constant vector in $\bx$ when $f$ is quadratic and $g_i$ and $h_j$ are affine. Thus the corresponding solution path is piecewise linear. This recovers the case studied in \citep{ZhouLange11LSPath}. The differential equation (\ref{eqn:sol-ode}) holds on the current segment until one of two types of events happens: an inactive constraint becomes active or vice versa. The first type of event is easy to detect -- whenever a constraint function, $g_i(\bx)$, $i \in {\cal N}_{\text{E}} \cup {\cal P}_{\text{E}}$, or $h_j(\bx)$, $j \in {\cal N}_{\text{I}} \cup {\cal P}_{\text{I}}$, hits zero, we move that constraint to the active set ${\cal Z}_{\text{E}}$ or ${\cal Z}_{\text{I}}$ and start solving a new system of differential equations. To detect when the second type of event happens, we need to keep track of the coefficients $s_i(\bx)$ and $t_j(\bx)$ for active constraints. Whenever the coefficient of an active constraint hits the boundary of its permissible range in (\ref{eqn:path-subgradient}), the constraint has to be relaxed from being active in next segment. It turns out the coefficients for active constraints admit a simple representation in terms of current solution vector.
\begin{proposition}
\label{prop:coeff}
On a path segment with set configuration (\ref{eqn:set-config}), the coefficients $s_i$ and $t_j$ for active constraints are
\begin{eqnarray}
    & & \br_{{\cal Z}}(\rho) = \left( \begin{array}{c} \bs_{{\cal Z}_{\text{E}}}(\rho) \\ \bt_{{\cal Z}_{\text{I}}}(\rho) \end{array} \right)
    = - \bQ^t(\bx) \left[ \frac{1}{\rho} \nabla f(\bx) + \bu_{\bar {\cal Z}} \right]  \label{eqn:active-coeff}
\end{eqnarray}
where $\bx = \bx(\rho)$ is the solution at $\rho$ and the matrix $\bQ(\bx)$ is defined by (\ref{eqn:PQR}).
\end{proposition}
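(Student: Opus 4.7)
The plan is to read off the coefficients $\br_{\cal Z}$ directly from the stationarity condition (\ref{path_stationary}) by eliminating the inactive part and then inverting the remaining linear relation using the block matrix identities that come out of the definition of $\bP, \bQ, \bR$.

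First I would write stationarity as a sum over active and inactive constraints. On the segment with configuration (\ref{eqn:set-config}), the subdifferential constraints (\ref{eqn:path-subgradient}) force $s_i=-1$ for $i\in{\cal N}_{\text{E}}$, $s_i=+1$ for $i\in{\cal P}_{\text{E}}$, $t_j=0$ for $j\in{\cal N}_{\text{I}}$, and $t_j=+1$ for $j\in{\cal P}_{\text{I}}$. Collecting these into $\bu_{\bar{\cal Z}}$ as defined in (\ref{eqn:def-Hu}), and writing the contribution of the active constraints using $\bU_{\cal Z}$, the stationarity equation becomes
\begin{equation*}
\mathbf{0} \;=\; \nabla f(\bx) + \rho\,\bu_{\bar{\cal Z}} + \rho\,\bU_{\cal Z}^{t}\,\br_{\cal Z},
\qquad\text{i.e.,}\qquad
\bU_{\cal Z}^{t}\,\br_{\cal Z} \;=\; -\Bigl[\tfrac{1}{\rho}\nabla f(\bx) + \bu_{\bar{\cal Z}}\Bigr].
\end{equation*}

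Next I would extract $\br_{\cal Z}$ using the block inverse relationship. By definition of $\bP,\bQ,\bR$ in (\ref{eqn:PQR}), the identity
\begin{equation*}
\left(\begin{array}{cc} \bP(\bx) & \bQ(\bx) \\ \bQ^{t}(\bx) & \bR(\bx) \end{array}\right)
\left(\begin{array}{cc} \bH(\bx) & \bU_{\cal Z}^{t} \\ \bU_{\cal Z} & \mathbf{0} \end{array}\right) = \bI
\end{equation*}
gives, from the $(2,2)$ block, $\bQ^{t}(\bx)\,\bU_{\cal Z}^{t} = \bI$. Left-multiplying the previous display by $\bQ^{t}(\bx)$ therefore produces exactly
\begin{equation*}
\br_{\cal Z}(\rho) \;=\; \bQ^{t}(\bx)\bU_{\cal Z}^{t}\br_{\cal Z} \;=\; -\bQ^{t}(\bx)\Bigl[\tfrac{1}{\rho}\nabla f(\bx) + \bu_{\bar{\cal Z}}\Bigr],
\end{equation*}
which is (\ref{eqn:active-coeff}).

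The only non-routine point is justifying that $\bQ$ and $\bR$ are well-defined: they require $\bU_{\cal Z}\bH^{-1}(\bx)\bU_{\cal Z}^{t}$ to be invertible, which holds because $\bH(\bx)$ is positive definite by strict convexity of $f$ and because the rows of $\bU_{\cal Z}$ are the gradients of the active constraint functions, which we assume linearly independent by part 3 of Lemma \ref{lemma:continuity}. Under this assumption, $\br_{\cal Z}$ is uniquely determined by stationarity (the system $\bU_{\cal Z}^{t}\br_{\cal Z}=\bc$ has at most one solution), so the representation obtained via $\bQ^{t}$ is in fact the unique coefficient vector. No obstacle beyond this bookkeeping is anticipated; the result is essentially the dual side of the primal ODE (\ref{eqn:sol-ode}) derived in Proposition \ref{prop:sol-ode}, read off from the same KKT block system.
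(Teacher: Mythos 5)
Your proposal is correct and follows essentially the same route as the paper's proof: rearrange the stationarity condition (\ref{path_stationary}) into $\bU_{\cal Z}^{t}\br_{\cal Z} = -\frac{1}{\rho}\nabla f(\bx) - \bu_{\bar{\cal Z}}$ and multiply through by $\bQ^{t}(\bx)$. You additionally make explicit the identity $\bQ^{t}(\bx)\bU_{\cal Z}^{t} = \bI$ and the linear-independence condition guaranteeing uniqueness, both of which the paper leaves implicit.
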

\begin{proof}
Stationarity condition (\ref{path_stationary}) implies
\begin{align*}
    \bU_{\cal Z}^t \br_{\cal Z} = - \frac{1}{\rho} \nabla f(\bx) - \bU_{\bar {\cal Z}}^t \br_{\bar {\cal Z}} = -\frac{1}{\rho} \nabla f(\bx) - \bu_{\bar {\cal Z}}.
\end{align*}
Multiplying both sides by $\bQ(\bx)$ gives (\ref{eqn:active-coeff}).
\end{proof}
\noindent
Given current solution vector $\bx(\rho)$, the coefficients of the active constraints are readily obtained from (\ref{eqn:active-coeff}). Once a coefficient hits the end points, we move that constraint from the active set to the inactive set that matches the end point being hit. In next section, we detail the implementation of the path algorithm.

\section{Implementation: ODE and Sweeping Operator}
\label{sec:implmentation}

\begin{algorithm}
\begin{algorithmic}
\STATE Initialize $\rho = 0$, $\bbeta(0) = \text{argmin} f(\bbeta)$ and its set configuration (\ref{eqn:set-config}).
\REPEAT
\STATE Solve ODE (\ref{eqn:sol-ode}) until an inactive constraint becomes active or the coefficient (\ref{eqn:active-coeff}) of an active constraint hits boundary.
\STATE Update the set configuration (\ref{eqn:set-config}).
\UNTIL{${\cal N}_{\text{E}} = {\cal P}_{\text{E}} = {\cal P}_{\text{I}} = \emptyset$}
\end{algorithmic}
\caption{EPSODE: Solution path for regularization problem (\ref{eqn:pathalgo-obj}) with strictly convex $f$.}
\label{algo:primal-path}
\end{algorithm}

Algorithm \ref{algo:primal-path} summarizes EPSODE based on Propositions \ref{prop:sol-ode} and \ref{prop:coeff}. It involves solving ODEs segment by segment and is extremely simple to implement using softwares with a reliable ODE solver such as the {\tt ode45} function in {\tt Matlab} and the {\tt deSolve} package \citep{Karline10RdeSolve} in {\tt R}. There has been extensive research in applied mathematics on numerical methods for solving ODEs, notably the Runge-Kutta, Richardson extrapolation and predictor-corrector methods. Some  path following algorithms developed for specific statistical problems \citep{ParkHastie07GLMLasso,Friedman08GPS} turn out to be approximate methods for solving the corresponding ODE. \cite{Wu10ODELasso} first explicitly uses ODE to derive an exact solution path for the lasso penalized GLM. The connection of path following to ODE relieves statisticians from the burden of developing specific path algorithms for a variety of regularization problems.

Any ODE solver repeatedly evaluates the derivative. Suppose the number of parameters is $p$. Computation of the matrix-vector multiplications in (\ref{eqn:sol-ode}) and (\ref{eqn:active-coeff}) has computation cost of order $O(p^2)+O(p|{\cal Z}|)+O(|{\cal Z}|^3)$ if the inverse $H^{-1}$ of Hessian matrix of loss function $f$ is readily available, where ${\cal Z}={\cal Z}_{\text{E}}\cup {\cal Z}_{\text{I}}$ and $|{\cal Z}|$ denotes its cardinality. Otherwise the computation cost is $O(p^3)+O(p|{\cal Z}|)+O(|{\cal Z}|^3)$.

An alternative implementation avoids repeated matrix inversions by solving an ODE for the matrices $\bP$, $\bQ$ and $\bR$ themselves. The computations can be conveniently organized around the classical sweep and inverse sweep operators of regression analysis \citep{Dempster69Book,Goodnight79Sweep,Jennrich77Stepwisereg,LittleRubin02Book,Lange10NumAnalBook}. Suppose $\bA$ is an $m \times m$ symmetric matrix. Sweeping on the $k$th diagonal entry
$a_{kk} \ne 0$ of $\bA$ yields a new symmetric matrix $\widehat{\bA}$ with entries
\begin{eqnarray*}
    \hat{a}_{kk} &=& - \frac{1}{a_{kk}}, \\
    \hat{a}_{ik} &=& \frac{a_{ik}}{a_{kk}}, \quad i \ne k  \\
    \hat{a}_{kj} &=& \frac{a_{kj}}{a_{kk}}, \quad j \ne k   \\
    \hat{a}_{ij} &=& a_{ij} - \frac{a_{ik}a_{kj}}{a_{kk}}, \quad i,j \ne k .
\end{eqnarray*}
These arithmetic operations can be undone by inverse sweeping on the same diagonal entry. Inverse sweeping on the $k$th diagonal entry sends the symmetric matrix $\bA$ into the symmetric matrix $\check{\bA}$ with entries
\begin{eqnarray*}
    \check{a}_{kk} &=& - \frac{1}{a_{kk}}, \\
    \check{a}_{ik} &=&  - \frac{a_{ik}}{a_{kk}}, \quad i \ne  k\\
    \check{a}_{kj} &=& - \frac{a_{kj}}{a_{kk}},  \quad j \ne k\\
    \check{a}_{ij} &=& a_{ij} - \frac{a_{ik}a_{kj}}{a_{kk}}, \quad i,j \ne k.
\end{eqnarray*}
Both sweeping and inverse sweeping preserve symmetry. Thus, all operations can be carried out on either the lower or upper triangle of $\bA$ alone, saving both computational time and storage. When several sweeps or inverse sweeps are performed, their order is irrelevant.

At beginning ($\rho=0$) of the path following, we initialize a sweeping tableau as
\begin{eqnarray*}
   \left( \begin{array}{c|c}
   \bH^{-1}(\bx) & \bH^{-1}(\bx) \bU^t \\
   \hline
   * & \bU \bH^{-1}(\bx) \bU^t
   \end{array} \right),
\end{eqnarray*}
where the matrix $\bU \in \mathbb{R}^{(r+s)\times p}$ holds all constraint differentials $\bv_i^t$ and $\bw_j^t$ in rows. Further sweeping of diagonal entries corresponding to the active constraints yields
\begin{eqnarray}
   \left( \begin{array}{c|cc}
     \bP(\bx) & \bQ(\bx) & \bP(\bx) \bU_{\bar{\cal Z}}^t \\       \hline
   * & \bR(\bx) &  \bQ^t(\bx) \bU_{\bar{\cal Z}}^t \\
   * & * & \bU_{\bar{\cal Z}} \bP(\bx) \bU_{\bar {\cal Z}}^t
   \end{array} \right). \label{eqn:sweep-tableau}
\end{eqnarray}
Here we conveniently organized the columns of the swept active constraints before those of un-swept ones. In practice the sweep tableau is not necessary as in (\ref{eqn:sweep-tableau}) and it is enough to keep an indicator vector recording which columns are swept. The key elements for the path algorithm magically appear in the sweep tableau (\ref{eqn:sweep-tableau})
\begin{eqnarray*}
    \frac{d\bx(\rho)}{d\rho} &=& - \bP(\bx) \bU_{\bar{\cal Z}}^t \br_{\bar{\cal Z}}  \\
    \br_{{\cal Z}}(\rho) &=& - \bQ^t(\bx) \bU_{\bar{\cal Z}}^t \br_{\bar{\cal Z}} - \frac{1}{\rho} \bQ^t(\bx) \nabla f(\bx).
\end{eqnarray*}
Therefore path following procedure only involves solving ODE for the whole sweep tableau (\ref{eqn:sweep-tableau}) with sweeping or inverse sweeping at kinks between successive segments. For this purpose we derive the ODE for the sweep tableau (\ref{eqn:sweep-tableau}). We adopt the convenient notations in \citep{MagnusNeudecker99MatrixBook}. For a matrix function $F(\bX): \mathbb{R}^{n\times q} \to \mathbb{R}^{m \times p}$,
\begin{eqnarray*}
    DF(\bX) = \frac{\partial \text{vec} F(\bX)}{\partial (\text{vec} \bX)^t}
\end{eqnarray*}
denotes the $mp \times nq$ Jacobian matrix. For example, Proposition \ref{prop:sol-ode} states $D \bx(\rho) = - \bP(\bx) \bu_{\bar {\cal Z}}$.
\begin{proposition}[ODE for Sweep Tableau]
On a segment of path with fixed set configuration, the matrices $\bP(\rho)$, $\bQ(\rho)$ and $\bR(\rho)$ satisfy the ordinary differential equations (ODE)
\begin{align*}
    D \bP(\rho) &= [\bP(\bx) \otimes \bP(\bx)] \cdot [D \bH(\bx)] \cdot \bP(\bx) \bu_{\bar {\cal Z}} \\
    D \bQ(\rho) &= [\bQ^t(\bx) \otimes \bP(\bx)] \cdot [D \bH(\bx)] \cdot \bP(\bx) \bu_{\bar {\cal Z}}  \\
    D \bR(\rho) &=[\bQ^t(\bx) \otimes \bQ^t(\bx)] \cdot [D \bH(\bx)] \cdot \bP(\bx) \bu_{\bar {\cal Z}}.
\end{align*}
\end{proposition}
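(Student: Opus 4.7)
The plan is to view $\bP,\bQ,\bR$ as blocks of the inverse of the single bordered matrix
\[
\bM(\bx) \;=\; \begin{pmatrix} \bH(\bx) & \bU_{\cal Z}^t \\ \bU_{\cal Z} & \mathbf{0} \end{pmatrix},
\qquad \bM^{-1}(\bx) \;=\; \begin{pmatrix} \bP(\bx) & \bQ(\bx) \\ \bQ^t(\bx) & \bR(\bx) \end{pmatrix},
\]
and differentiate along the path via the chain rule. Since $\bU_{\cal Z}$ is constant on the current segment (the set configuration is fixed) and $\bH$ is the only $\bx$-dependent block,
\[
d\bM(\bx) \;=\; \begin{pmatrix} d\bH(\bx) & \mathbf{0} \\ \mathbf{0} & \mathbf{0} \end{pmatrix}.
\]

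Next I would apply the standard matrix-inverse differential identity $d(\bM^{-1}) = -\bM^{-1}(d\bM)\bM^{-1}$ and multiply out the blocks. Because of the zero pattern in $d\bM$, the four block entries collapse to
\[
d\bP \;=\; -\bP\,(d\bH)\,\bP,\qquad
d\bQ \;=\; -\bP\,(d\bH)\,\bQ,\qquad
d\bR \;=\; -\bQ^t\,(d\bH)\,\bQ.
\]
This is the core calculation and it is entirely routine once the zero-block observation is in place.

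Then I would vectorize, using the Magnus--Neudecker identity $\text{vec}(\bA\bX\bC) = (\bC^t \otimes \bA)\,\text{vec}(\bX)$ together with the symmetry of $\bP$. This yields
\[
D\bP(\bx) = -(\bP \otimes \bP)\,D\bH(\bx),\quad
D\bQ(\bx) = -(\bQ^t \otimes \bP)\,D\bH(\bx),\quad
D\bR(\bx) = -(\bQ^t \otimes \bQ^t)\,D\bH(\bx).
\]
Finally, apply the chain rule to convert Jacobians in $\bx$ into Jacobians in $\rho$. By Proposition \ref{prop:sol-ode}, $D\bx(\rho) = -\bP(\bx)\bu_{\bar{\cal Z}}$, so a sign cancellation gives exactly the three stated ODEs.

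The main obstacle is purely bookkeeping: getting the Kronecker factors in the right order and using the correct convention for $D$ versus $d$, in particular remembering that $\bP$ is symmetric so $\bP^t \otimes \bP = \bP \otimes \bP$ while $\bQ$ is rectangular and must be transposed in the $\bQ$ and $\bR$ formulas. No analytical subtleties arise because $\bH$ is assumed to be twice continuously differentiable and, within the interior of a segment, $\bU_{\bar{\cal Z}}^t [\bU_{\cal Z} \bH^{-1} \bU_{\cal Z}^t]^{-1}$ and $\bH^{-1}$ remain smooth, so all differentiations are justified.
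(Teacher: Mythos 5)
Your proof is correct, and it takes a genuinely different route from the paper's. The paper works with the small matrix $\bM = \bU_{\cal Z}\bH^{-1}(\bx)\bU_{\cal Z}^t$, writes $\bR = -\bM^{-1}$, and differentiates through the chain $\bR(\bM(\bH(\bx(\rho))))$ using Kronecker-product Jacobians at each link; it then obtains $D\bQ$ and $D\bR$ by ``similar calculations'' starting from the explicit formulas $\bQ = -\bH^{-1}\bU_{\cal Z}^t\bR$ and $\bP = \bH^{-1} - \bQ\bU_{\cal Z}\bH^{-1}$ in (\ref{eqn:PQR}). You instead apply $d(\bM^{-1}) = -\bM^{-1}(d\bM)\bM^{-1}$ once to the full bordered KKT matrix and read off all three block differentials $d\bP = -\bP(d\bH)\bP$, $d\bQ = -\bP(d\bH)\bQ$, $d\bR = -\bQ^t(d\bH)\bQ$ simultaneously from the zero pattern of $d\bM$; vectorization and the chain rule through $D\bx(\rho) = -\bP(\bx)\bu_{\bar{\cal Z}}$ then give the stated ODEs with the sign cancellation you note. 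Your version is more unified and arguably less error-prone, since it avoids three separate Kronecker bookkeeping exercises and makes the symmetry among the three formulas transparent; the paper's version has the mild advantage of only ever differentiating the $|{\cal Z}|\times|{\cal Z}|$ matrix $\bM$ and of staying closer to the explicit expressions used in the sweep-tableau implementation. Both arguments rest on the same hypotheses (nonsingular $\bH$ from strict convexity and linearly independent active constraint gradients, which together make the bordered matrix invertible), so nothing is lost in your formulation.
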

\begin{proof}
First consider
\begin{eqnarray*}
    \bR(\bx) = - \left[ \bU_{{\cal Z}} \bH^{-1}(\bx) \bU_{{\cal Z}}^t \right]^{-1} = -\bM^{-1}(\bx).
\end{eqnarray*}
By chain rule \cite[p91]{MagnusNeudecker99MatrixBook},
\begin{eqnarray*}
    D \bR(\rho)
    &=& D \bR(\bM) \cdot D \bM(\bH) \cdot D \bH(\bx) \cdot D \bx(\rho) \\
    &=& [\bR(\bx) \otimes \bR(\bx)] \cdot D \bM(\bH) \cdot D \bH(\bx) \cdot D \bx(\rho) \\
    &=& - [\bR(\bx) \otimes \bR(\bx)] \cdot \{ [\bU_{\cal Z} \bH^{-1}(\bx) \otimes \bU_{\cal Z} \bH^{-1}(\bx)] \cdot D \bH(\bx) \} \cdot D\bx(\rho) \\
    &=& [\bQ^t(\bx) \otimes \bQ^t(\bx)] \cdot [D \bH(\bx)] \cdot \bP(\bx) \bu_{\bar {\cal Z}}.
\end{eqnarray*}
Similar calculations yield formula for $\bQ(\bx) = - \bH^{-1}(\bx) \bU_{{\cal Z}}^t \bR(\bx)$ and $\bP(\bx) = \bH^{-1}(\bx) - \bQ(\bx) \bU_{\cal Z} \bH^{-1}(\bx)$.
\end{proof}

Solving ODE for these matrices requires the $p^2$-by-$p$ Jacobian matrix of the Hessian matrix $\bH(\bx) = d^2f(\bx)$,
\begin{eqnarray*}
    D \bH(\bx) = \frac{\partial [\text{vec} \bH(\bx)]}{\partial \text{vec} (\bx)^t} = \frac{\partial \text{vec}[df^2(\bx)]}{\partial \text{vec}(\bx)^t},
\end{eqnarray*}
which we provide for each example in Section \ref{sec:examples} for convenience. When the number of parameter $p$ is large, $D\bH$ is a large matrix. However there is no need to compute and store $D\bH$ and we are only required to compute the matrix vector multiplication $D\bH \cdot \bv$ for any vector $\bv$. In light of the useful identity $(\bB^t \otimes \bA) \text{vec}(\bC) = \text{vec}(\bA \bC \bB)$, evaluating the derivative for the whole tableau only involves multiplying three matrices and incurs computational cost $O(p^3) + O(p^2 |{\cal Z}|) + O(p|{\cal Z}|^2)$.

Although we have presented the path algorithm as moving from $\rho=0$ to large $\rho$, it can be applied in either direction. Lasso and fused-lasso usually start from the constrained solution, while in presence of general equality constraints, e.g., polynomial trend filtering, and/or inequality constraints, the constrained solution is not readily available and the path algorithm must be initiated at $\rho=0$.

\section{Extension of EPSODE}
\label{sec:ext}

So far we have assumed strictly convexity of the loss function $f$. This unfortunately excludes many interesting applications, especially $p>n$ case of the regression problems. In this section we briefly indicate an extension of EPSODE to the case $f$ is convex but not necessarily strictly convex. In the proof of Proposition \ref{prop:sol-ode}, the infinitesimal change of solution $\Delta \bx$ is derived via minimizing the equality-constrained quadratic program (\ref{eqn:deltax}), the solution to which requires inverse of Hessian $\bH^{-1}$ and thus strict convexity of $f$. Alternatively we may solve (\ref{eqn:deltax}) via reparameterization. Let $\bU_{\cal Z}$ hold the active constraint vectors and $\bY \in \mathbb{R}^{p \times (p-|{\cal Z}|)}$ be a null space matrix of $\bU_{\cal Z}$, i.e., the columns of $\bY$ are orthogonal to the rows of $\bU_{\cal Z}$. Then the infinitesimal change can be represented as $\Delta \bx = \bY \Delta \by$ for some vector $\Delta \by \in \mathbb{R}^{p-|{\cal Z}|}$. Under this reparameterization, the quadratic program (\ref{eqn:deltax}) is equivalent to
\begin{align*}
    \min_{\Delta \by}   \, \frac 12 \Delta \by^t [\bY^t \bH(\bx) \bY] \Delta \by + [ df(\bx) + (\rho + \Delta \rho) \bu_{\bar {\cal Z}}^t]\bY \cdot \Delta \by
\end{align*}
with explicit solution
\begin{align*}
    \Delta \by = - [\bY^t \bH(\bx) \bY]^{-1} \bY^t [\nabla f(\bx)+(\rho+\Delta \rho) \bu_{\bar {\cal Z}}].
\end{align*}
Hence the infinitesimal change in $\bx(\rho)$ is
\begin{align*}
    \Delta \bx &= - \bY [\bY^t \bH(\bx) \bY]^{-1} \bY^t [\nabla f(\bx)+(\rho+\Delta \rho)\bu_{\bar {\cal Z}}] \\
    &= - \Delta \rho \cdot \bY [\bY^t \bH(\bx) \bY]^{-1}\bY^t \bu_{\bar {\cal Z}}].
\end{align*}
Again taking limit gives the following result in parallel to Proposition \ref{prop:sol-ode}.
\begin{proposition}
\label{prop:sol-ode-alt}
Within interior of a path segment with set configuration (\ref{eqn:set-config}), the solution $\bx(\rho)$ satisfies an ordinary differential equation (ODE)
\begin{eqnarray}
    & & \frac{d\bx(\rho)}{d\rho} = - \bY [\bY^t \bH(\bx) \bY]^{-1}\bY^t \bu_{\bar{\cal Z}} \label{eqn:sol-ode-alt}
\end{eqnarray}
where $\bY$ is a null space matrix of $\bU_{\cal Z}$.
\end{proposition}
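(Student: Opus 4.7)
The plan is to mimic the derivation of Proposition \ref{prop:sol-ode} but eliminate the use of $\bH^{-1}(\bx)$, which may not exist when $f$ is merely convex. Starting from the same equality-constrained quadratic program (\ref{eqn:deltax}) that characterizes the infinitesimal update $\Delta \bx$, I will parameterize feasible directions by the null space matrix $\bY$ of $\bU_{\cal Z}$. Because the active-constraint block of (\ref{eqn:deltax}) is precisely $\bU_{\cal Z} \Delta \bx = {\bf 0}$, every feasible $\Delta \bx$ admits a representation $\Delta \bx = \bY \Delta \by$ for a unique $\Delta \by \in \mathbb{R}^{p-|{\cal Z}|}$ (unique up to the freedom in choosing $\bY$, which will not matter in the final formula).

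Substituting $\Delta \bx = \bY \Delta \by$ into the objective of (\ref{eqn:deltax}) produces the unconstrained quadratic
\begin{align*}
    \min_{\Delta \by} \ \tfrac{1}{2}\, \Delta \by^t [\bY^t \bH(\bx) \bY] \Delta \by + [\nabla f(\bx) + (\rho+\Delta\rho)\bu_{\bar{\cal Z}}]^t \bY \Delta \by .
\end{align*}
The first-order condition yields $\Delta \by = -[\bY^t \bH(\bx) \bY]^{-1}\bY^t[\nabla f(\bx)+(\rho+\Delta\rho)\bu_{\bar{\cal Z}}]$, provided the reduced Hessian $\bY^t \bH(\bx) \bY$ is invertible. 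Next I invoke the stationarity condition (\ref{path_stationary}) at the current solution, which gives $\nabla f(\bx) + \rho \bu_{\bar{\cal Z}} + \rho \bU_{\cal Z}^t \br_{\cal Z} = {\bf 0}$. Multiplying by $\bY^t$ and using $\bY^t \bU_{\cal Z}^t = {\bf 0}$ collapses the bracketed expression to $\bY^t[\nabla f(\bx)+(\rho+\Delta\rho)\bu_{\bar{\cal Z}}] = \Delta \rho \cdot \bY^t \bu_{\bar{\cal Z}}$. Therefore $\Delta \bx = \bY \Delta \by = -\Delta \rho \cdot \bY[\bY^t \bH(\bx)\bY]^{-1} \bY^t \bu_{\bar{\cal Z}}$. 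Dividing by $\Delta \rho$ and letting $\Delta \rho \to 0$ yields the ODE (\ref{eqn:sol-ode-alt}).

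The main obstacle is justifying the invertibility of the reduced Hessian $\bY^t \bH(\bx) \bY$. Strict convexity of $f$ was exactly what made the full $\bH(\bx)$ invertible in the previous derivation; here we need only the weaker property that $f$ is strictly convex along feasible directions, i.e.\ $\bv^t \bH(\bx) \bv > 0$ for every nonzero $\bv$ in the null space of $\bU_{\cal Z}$. This is a local condition that typically holds when the active constraints are \emph{identifying} in the sense that together with $\bH(\bx)$ they pin down a unique minimizer of ${\cal E}_\rho$ on the current segment; it is automatic for the applications in Section \ref{sec:ext} where the active constraints reduce the effective parameter space to one on which $f$ is strictly convex. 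A minor subsidiary remark is that the formula in (\ref{eqn:sol-ode-alt}) is independent of the particular choice of null space matrix $\bY$, since $\bY[\bY^t \bH \bY]^{-1}\bY^t$ agrees with the orthogonal projector-adjusted pseudoinverse appearing in the constrained least squares literature and depends only on $\text{range}(\bY)$.
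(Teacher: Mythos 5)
Your proposal is correct and follows essentially the same route as the paper: reparameterize the feasible directions of the quadratic program (\ref{eqn:deltax}) as $\Delta\bx = \bY\Delta\by$, solve the resulting unconstrained reduced quadratic, use stationarity together with $\bY^t\bU_{\cal Z}^t = {\bf 0}$ to reduce the right-hand side to $\Delta\rho\cdot\bY^t\bu_{\bar{\cal Z}}$, and pass to the limit. Your explicit invocation of (\ref{path_stationary}) at this step, and your remarks on the invertibility of $\bY^t\bH(\bx)\bY$ and the independence of the formula from the choice of $\bY$, only make explicit what the paper leaves implicit or defers to the discussion following the proposition.
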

\noindent
An advantage of (\ref{eqn:sol-ode-alt}) is that only non-singularity of the matrix $\bY^t \bH(\bx) \bY$ is required which is much weaker than the non-singularity of $\bH$. The computational cost of calculating the derivative in (\ref{eqn:sol-ode-alt}) is $O((p-|{\cal Z}|)^3) + O(p(p-|{\cal Z}|))$, which is more efficient than (\ref{eqn:sol-ode}) when $p-|{\cal Z}|$ is small. However it requires the null space matrix $\bY$, which is nonunique and may be expensive to compute. Fortunately the null space matrix $\bY$ is constant over each path segment and in practice can be calculated by QR decomposition of the active constraint matrix $\bU_{\cal Z}$. At each kink either one constraint leaves ${\cal Z}$ or one enters ${\cal Z}$. Therefore $\bY$ can be sequentially updated \citep{LawsonHanson87LSBook} and need not to be calculated anew for each segment. Which version of (\ref{eqn:sol-ode}) and (\ref{eqn:sol-ode-alt}) to use depends on specific application. When the loss function $f$ is not strictly convex, e.g., $p>n$ case in regression analysis, only (\ref{eqn:sol-ode-alt}) applies.  Interested readers are referred to \citep{NocedalWright06Book} for a similar dilemma in optimization methods.

\section{Model Selection Along the Path}
\label{sec:dof}

In applications such as penalized GLMs,  the tuning parameter $\rho$ in the regularization problem (\ref{eqn:pathalgo-obj}) is chosen by a model selection criterion such as AIC, BIC, $C_p$, or cross-validation. The cross validation errors can be readily computed using the solution path output by EPSODE. Yet the AIC, BIC, and $C_p$ criteria require an estimate of the degrees of freedom of estimate $\bbeta(\rho)$. Specifically AIC and BIC are defined by
\begin{align*}
    \text{AIC} &= - \ell (\bbeta(\rho)) + \text{df}(\bbeta(\rho)) \\
    \text{BIC} &= - \ell (\bbeta(\rho)) + \frac{\log n}{2} \text{df}(\bbeta(\rho)),
\end{align*}
where $-\ell(\cdot)$ denotes the negative log-likelihood and $\text{df}(\bbeta_\rho)$ is the degrees of freedom for estimate $\bbeta_\rho$. We propose to use
\begin{align}
    \text{df}(\bbeta(\rho)) = p - |{\cal Z}_{\text{E}} \cup {\cal Z}_{\text{I}}|  \label{eqn:dof}
\end{align}
as a measure of the degrees of freedom under GLMs. It is previously shown that (\ref{eqn:dof}) is an unbiased estimate of the degrees of freedom for lasso penalized least squares \citep{EfronHastieIainTibshirani04LARS,ZouHastieTibshirani07LassoDF}, generalized lasso penalized least squares \citep{TibshiraniTaylor10GenLasso}, and the least squares version of the regularized problem (\ref{eqn:pathalgo-obj}) \citep{ZhouLange11LSPath}. Using the same degrees of freedom formula (\ref{eqn:dof}) for GLMs is justified by the local approximation of GLM loglikelihood by weighted least squares. See \citep{ParkHastie07GLMLasso} for details.

\section{Applications}
\label{sec:examples}

In this section, we collect some representative regularized or constrained estimation problems and demonstrate how they can be solved by path following. For all applications, we list the first three derivatives of the loss function $f$ in (\ref{eqn:pathalgo-obj}). In fact, the third derivative is only needed when implementing by solving the ODE for the sweep tableau.

\subsection{GLMs and Quasi-Likelihoods with Generalized $l_1$ Regularizations}

The generalized linear model (GLM) deals with exponential families in which the sufficient statistics is $Y$ and the conditional mean $\mu$ of $Y$ completely determines its distribution. Conditional on the covariate vector $\bx \in \mathbb{R}^{p}$, the response variable $y$ is modeled as
\begin{align}
    p(y|\bx; \bbeta, \sigma) \propto \exp \left\{ \frac{y \langle \bx, \bbeta \rangle - \psi(\langle\bx, \bbeta \rangle)}{c(\sigma)} \right\}, \label{eqn:glm}
\end{align}
where the scalar $\sigma > 0$ is a fixed and known scale parameter and the vector $\bbeta$ is the parameters to be estimated. The function $\psi: \mathbb{R} \mapsto \mathbb{R}$ is the link function. When $y \in \mathbb{R}$, $\psi(u) = u^2/2$ and $c(\sigma) = \sigma^2$, (\ref{eqn:glm}) is the {\it normal regression model}. When $y \in \{0,1\}$, $\psi(u) = \ln (1+ \exp(u))$ and $c(\sigma)=1$, (\ref{eqn:glm}) is the {\it logistic regression model}. When $y \in \mathbb{N}$, $\psi(u) = \exp(u)$, and $c(\sigma)=1$, (\ref{eqn:glm}) is the {\it Poisson regression model}.

The quasi-likelihoods generalize GLM without assuming a specific distribution form of $Y$. Instead only a function relation between the conditional means $\mu_i$ and variances $\sigma_i^2$,  $\sigma_i^2= V(\mu_i)$ for some variance function $V$, is needed. Then the integral
\begin{align*}
    Q(\mu, y) = \int_y^\mu \frac{y-t}{\sigma^2 V(t)} \, dt
\end{align*}
behaves like a log-likelihood function under mild conditions and is called the quasi-likelihood. The quasi-likelihood includes GLMs as special cases with appropriately chosen variance function $V(\cdot)$.  Readers are referred to the classical text \cite[Table 9.1]{McCullaghNelder83GLMBook} for the commonly used quasi-likelihoods. By slightly abusing our notation, we assume a known link function between the conditional mean $\mu_i$ and linear predictor $\bx_i^T\bbeta$, $\mu=\mu(\bx_i^T\bbeta)$  and denote $Q_i(\bbeta)=Q(\mu(\bx_i^T\bbeta), y_i)$. Then the quasi-likelihood with generalized $l_1$ regularization takes the form
\begin{eqnarray}
    & - Q(\bbeta) + \rho \|\bV \bbeta - \bd\|_1 = - \sum_{i=1}^n Q_i(\bbeta) + \rho \|\bV \bbeta - \bd\|_1, \label{eqn:quasiL-genl1}
\end{eqnarray}
which is a special case of the general form (\ref{eqn:pathalgo-obj}). Specific choices of the regularization matrix $\bV$ and constant vector $\bd$ lead to lasso, fused-lasso, trend filtering, and many other applications.

For the path algorithm, we require the first two or three derivatives of the complete quasi-likelihood. Denoting $\boldsymbol{\eta} = \bX \bbeta$ with $\bX=(\bx_1^t, \bx_2^t, \cdots, \bx_n^t)^t$, we have
\begin{eqnarray}
    \nabla Q(\bbeta) &=& [D \bmu(\boldsymbol{\eta})]^t \bV^{-1} (\by - \bmu) / \sigma^2 = \bX^t [D \bmu(\boldsymbol{\eta})] \bV^{-1} (\by - \bmu) / \sigma^2,   \nonumber \\
    \bH(\bbeta) = d^2Q(\bbeta) &=& [(\by-\bmu)^t \bV^{-1} \otimes \bX^t] \cdot D^2 \bmu(\boldsymbol{\eta}) \cdot \bX /\sigma^2,  \label{eqn:GLM-derivatives} \\
    D\bH(\bbeta) = d^3Q(\bbeta) &=& [\bX^t \otimes (\by-\bmu)^t \bV^{-1} \otimes \bX^t] \cdot D^3 \bmu(\boldsymbol{\eta}) \cdot \bX /\sigma^2,    \nonumber
\end{eqnarray}
where $\bV$ is a $n$-by-$n$ diagonal matrix with diagonal entries $V(\mu(\bx_i^t\by))$, $D \bmu(\boldsymbol{\eta})$ is a $n$-by-$n$ diagonal matrix with diagonal entries $\mu'(\bx_i^t\bbeta)$, $D^2 \bmu(\boldsymbol{\eta})$ is a $n^2$-by-$n$ matrix with $(n(i-1)+i,i)$ entry equal to $\mu''(\bx_i^t\bbeta)$ for $i=1,\ldots,n$ and 0 otherwise, and $D^3 \bmu(\boldsymbol{\eta})$ is a $n^3$-by-$n$ matrix with $(n^2(i-1)+n(i-1)+i,i)$ entry equal to $\mu'''(\bx_i^t\bbeta)$ for $i=1,\ldots,n$ and 0 otherwise. Note for GLM with canonical link, these formulas simplify \citep[Section 4.6.4]{Agresti02Book}.

The most widely used $l_1$ regularization is the lasso penalty which imposes sparsity on the regression coefficients. For numerical demonstration, we revisit the M\&A example introduced in Section \ref{sec:intro} without discretizing  each predictor. We standardize each predictor first and consider the lasso penalized linear logistic regression model.  Figure \ref{fig:MandA-lassopath} shows the lasso solution path for each standardized predictor in the left panel and corresponding AIC and BIC scores in the right panel. The order at which predictors enter the model matches the more detailed patterns revealed by the varying coefficient model in Figure \ref{fig:MandA-estimates}. The almost monotone effects of the predictors `market-to-book ratio', `cash flow', 'cash', and 'tax' can be captured by the usual linear logistic regression and these covariates are picked up by lasso first. The nonlinear effects shown in the other predictors are likely to be missed by the linear logistic regression. For instance, the quadratic effects of `log market equity' shown in the regularized estimates in Figure \ref{fig:MandA-estimates} are missed by both AIC and BIC criteria.

\begin{figure}
$$
\begin{array}{cc}
\includegraphics[width=2.5in]{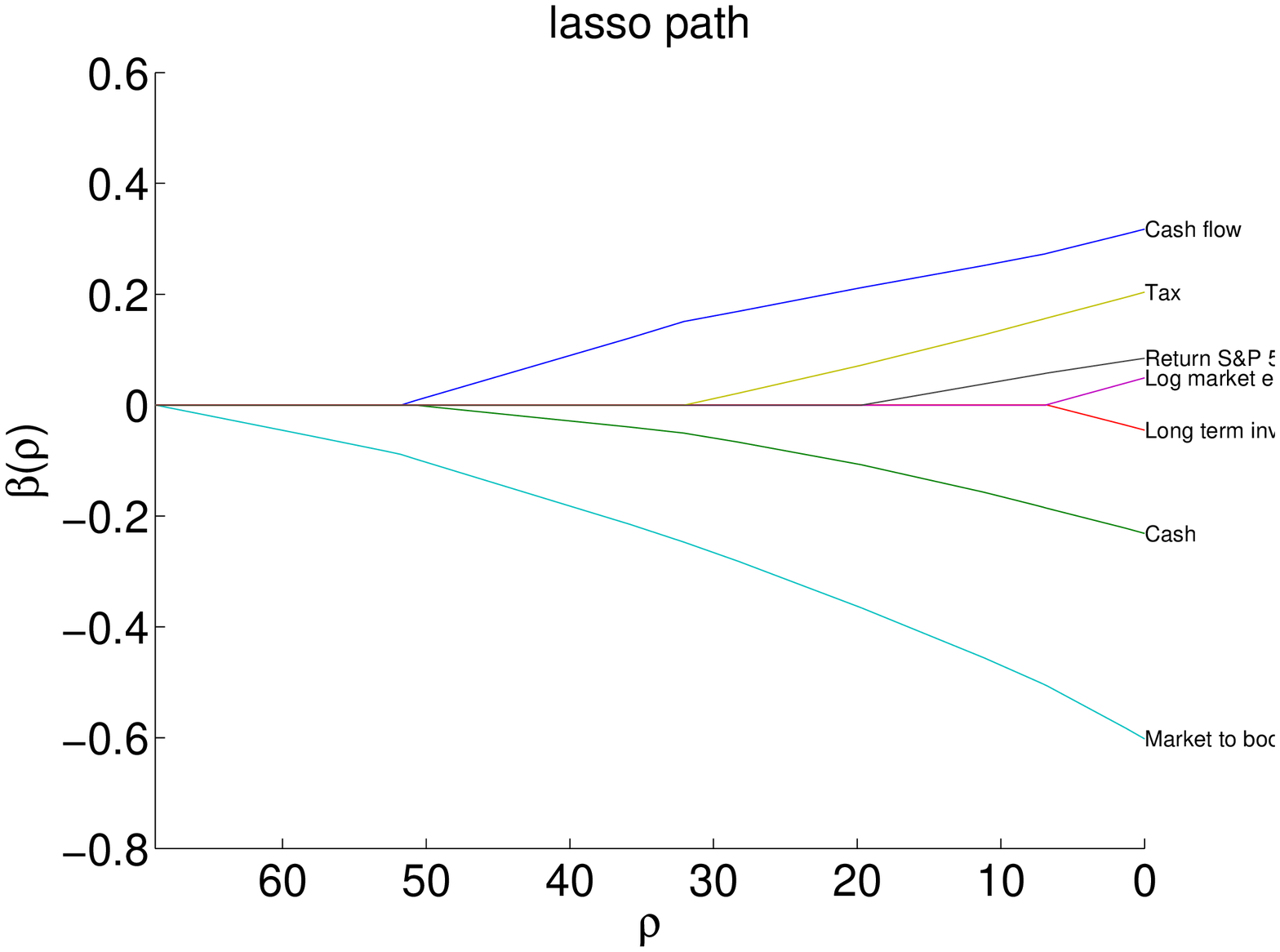} & \includegraphics[width=2.5in]{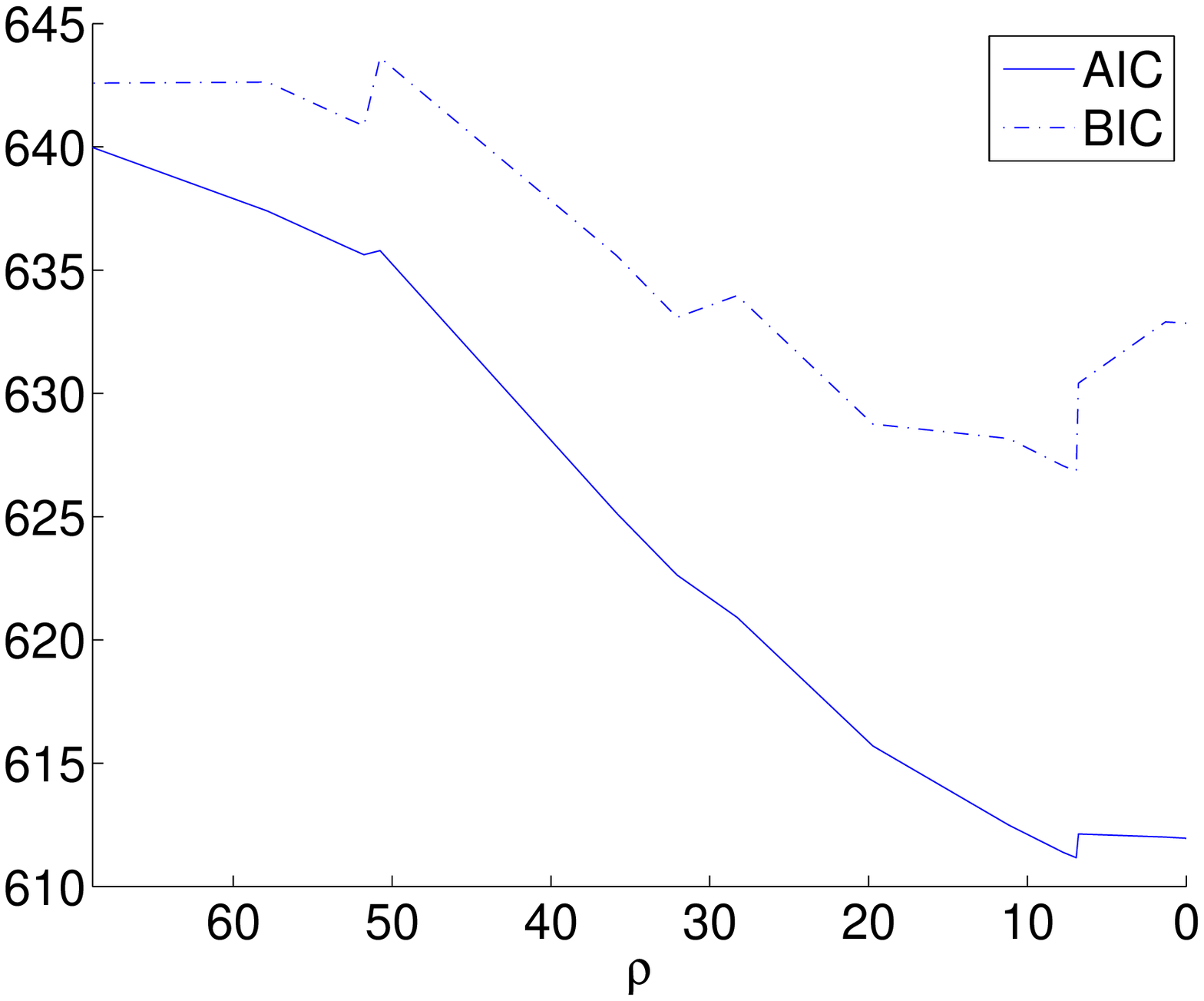}
\end{array}
$$
\caption{M\&A example revisited. Lasso solution path on the seven standardized predictors.}
\label{fig:MandA-lassopath}
\end{figure}

All the generalized lasso problems studied in \citep{TibshiraniTaylor10GenLasso} for Gaussian linear regression naturally generalize to GLMs or quasi-likelihoods and are subject to the EPSODE path algorithm. This leads to applications to lasso or fused-lasso penalized GLMs, outlier detections, trend filtering, and image restoration for GLMs. For instance, cubic trend filtering is performed on five predictors of the M\&A example in Section 1. The graph-guided penalized linear regression proposed in \citep{ChenLinKimCarbonellXing10ProxGrad} can also be generalized to GLMs or quasi-likelihoods. Suppose each node $i$ of a graph is assigned a regression coefficient $\beta_i$.  In graph penalized regression, the objective function takes the form
\begin{eqnarray}
    - \ell(\bbeta) + \lambda_{\text{G}} \sum_{i \sim j} \left| \frac{\beta_i}{\sqrt{d_i}} - \text{sgn}(r_{ij})\frac{\beta_j}{\sqrt{d_j}} \right| + \lambda_{\text{L}} \sum_j |\beta_j|, \label{graph_objective}
\end{eqnarray}
where the set of neighboring pairs $i \sim j$ define the graph, $d_i$ is the degree of node $i$, and $r_{ij}$ is the correlation coefficient between $i$ and $j$. This is simply a special case of (\ref{eqn:quasiL-genl1}) when the ratio $\lambda_{\text{G}}/\lambda_{\text{L}}$ is fixed.

\subsection{Shape-Restricted Regressions}

Order-constrained regression has been an important modeling tool \citep{RobertsonWrightDykstra88Book,SilvapullePranab05CSIBook}. If $\bbeta$ denotes the parameter vector, monotone regression imposes {\it isotone} constraints $\beta_1 \le \beta_2 \le  \cdots \le \beta_p$ or {\it antitone} constraints $\beta_1 \ge \beta_2 \ge  \cdots \ge \beta_p$. In {\it partially ordered regression}, subsets of the parameters are subject to isotone or antitone constraints.  In some other problems it is sensible to impose {\it convex} or {\it concave} constraints.
Note that if locations of regression parameters  are at irregularly spaced time points $t_1 \le t_2 \le \cdots \le t_p$, convexity translates into the constraints
\begin{eqnarray*}
\frac{\beta_{i+2}-\beta_{i+1}}{t_{i+2}-t_{i+1}} \ge \frac{\beta_{i+1}-\beta_i}{t_{i+1}-t_i}
\end{eqnarray*}
for $1 \le i \le p-2$. When the time intervals are uniform, the constraints simplify to  $\beta_{i+2} - \beta_{i+1} \ge \beta_{i+1} - \beta_{i}$, $i=1, 2, \cdots, p-1$. Concavity translates into the opposite set of inequalities.

Most of previous work has focused on the linear regression problems because of the computational and theoretical complexities in the generalized linear model setting. The recent work \citep{Rufibach10GLMOrder} proposes an active set algorithm for GLMs with order constraints. The EPSODE algorithm conveniently provides a solution to the linearly constrained estimation problem (\ref{linconsobj}). The relevant derivatives of loss function are listed in (\ref{eqn:GLM-derivatives}). It is noteworthy that EPSODE not only provides the constrained estimate but also the whole path bridging the unconstrained estimate to the constrained solution. Availability of the whole solution path renders model selection between the two extremes simple.

In the illustrative M\&A example of Section \ref{sec:intro}, the bin predictors for the `market-to-book ratio' are regularized by the antitone constraint and those for the `log market equity' covariate by the concavity constraint.

\subsection{Gaussian Graphical Models}

In recent years several authors \citep{Friedman08GLasso, Yuan08GraphLasso} proposed to estimate the sparse undirected graphical model by using lasso regularizations to the log-likelihood function of the precision matrix, the inverse of the variance-covariance matrix. Given an observed variance-covariance matrix $\hat \bSigma\in R^{p\times p}$, the negative log-likelihood of the precision matrix $\bOmega=\bSigma^{-1}$ under normal assumption is
\begin{align}
    f(\bOmega) = - \log \det \bOmega + \text{tr}(\hat \bSigma \bOmega)    \label{eqn:graph-obj}
\end{align}
with the MLE solution $\hat \bSigma^{-1}$ when $\hat\bSigma$ is non-degenerate. A zero in the precision matrix implies conditional independence of the corresponding nodes. Graphical lasso proposes to solve
\begin{align}
    f(\bOmega) + \rho \sum_{i < j} |\omega_{ij}|,   \label{eqn:glasso-obj}
\end{align}
where $\rho\geq0$ is the tuning constant and $\omega_{ij}$ denotes the $(i,j)$-element of $\bOmega$. It is well-known that the determinant function is log-concave \citep{MagnusNeudecker99MatrixBook}. Therefore the loss function $f$ (\ref{eqn:graph-obj}) is convex and the EPSODE algorithm applies to (\ref{eqn:glasso-obj}). \cite{Friedman08GLasso} proposed an efficient coordinate descent procedure for solving (\ref{eqn:glasso-obj}) at a fixed $\rho$. A recent attempt to approximate the whole solution path is made by \cite{Yuan08GraphLasso}. Again his path algorithm can be deemed as a primitive predictor-corrector method for approximating the ODE solution.

With symmetry in mind, we parameterize $\bOmega$ in terms of its lower triangular part by a $p(p+1)/2$ column vector $\bx$ and let $D\bOmega(\bx) = \frac{\partial \text{vec}\bOmega}{\partial (\text{vec}\bx)^t}$
be the corresponding $p^2$-by-$p(p+1)/2$ Jacobian matrix. Note $D \bOmega(\bx) \cdot \bx = \text{vec} \bOmega(\bx)$ and each row of $D\bOmega(\bx)$ has exactly one nonzero entry which equals unity. We list here the first three derivatives of $f$. The proof is straightforward using matrix calculus and omitted for brevity.
\begin{lemma}
\begin{enumerate}
\item The derivatives for the Gaussian graphical model (\ref{eqn:graph-obj}) with respect to $\bOmega$ are
\begin{align*}
    Df(\bOmega) &= df(\bOmega) = [\text{vec}(-\bOmega^{-1} + \bSigma)]^t \\
    D^2f(\bOmega) &= d^2f(\bOmega) = \bOmega^{-1} \otimes \bOmega^{-1}  \\
    D^3f(\bOmega) 
    &= - (\bI_n \otimes \bK_{nn} \otimes \bI_n) \\
    & \hspace{.2in} \cdot [\bOmega^{-1} \otimes \bOmega^{-1} \otimes \text{vec}(\bOmega^{-1}) + \text{vec}(\bOmega^{-1}) \otimes \bOmega^{-1} \otimes \bOmega^{-1}],
\end{align*}
where $\bK_{nn}$ is the commutation matrix \citep{MagnusNeudecker99MatrixBook}.
\item The derivatives for the Gaussian graphical model (\ref{eqn:graph-obj}) with respect to $\bx$ are
\begin{align*}
    Df(\bx) &= Df(\bOmega) \cdot D\bOmega(\bx)    \\
    \bH(\bx) = D^2f(\bx) &= [D\bOmega(\bx)]^t \cdot D^2f(\bOmega) \cdot D\bOmega(\bx)   \\
    D \bH(\bx) = D^3f(\bx) &= \{ [D\bOmega(\bx)]^t \otimes [D\bOmega(\bx)]^t\} \cdot D^3f(\bOmega) \cdot D\bOmega(\bx).
\end{align*}
\end{enumerate}
\end{lemma}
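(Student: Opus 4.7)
The plan is to derive each derivative from elementary matrix-differential identities and then lift from $\bOmega$ to $\bx$ by the chain rule. The three workhorses are $d\log\det\bOmega = \text{tr}(\bOmega^{-1}\,d\bOmega)$, $d\bOmega^{-1} = -\bOmega^{-1}(d\bOmega)\bOmega^{-1}$, and the vectorization rule $\text{vec}(\bA\bX\bB) = (\bB^t\otimes\bA)\,\text{vec}\bX$, all standard results from Magnus and Neudecker.

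For $Df(\bOmega)$, I would combine $d(-\log\det\bOmega) = -\text{tr}(\bOmega^{-1}\,d\bOmega)$ with $d\,\text{tr}(\hat\bSigma\bOmega) = \text{tr}(\hat\bSigma\,d\bOmega)$ to get $df = \text{tr}[(-\bOmega^{-1}+\hat\bSigma)\,d\bOmega]$, then rewrite via $\text{tr}(\bA^t\bB) = (\text{vec}\bA)^t\text{vec}\bB$ (legitimate because $-\bOmega^{-1}+\hat\bSigma$ is symmetric) to read off the row-vector gradient. For the Hessian I would differentiate once more: $d\,\text{vec}(-\bOmega^{-1}) = \text{vec}[\bOmega^{-1}(d\bOmega)\bOmega^{-1}] = (\bOmega^{-1}\otimes\bOmega^{-1})\,\text{vec}(d\bOmega)$, using the vec identity together with the symmetry of $\bOmega$.

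The third-order formula is the real work and will be the main obstacle. I would differentiate the Kronecker product $\bOmega^{-1}\otimes\bOmega^{-1}$ in $\bOmega$ using the product rule $d(\bA\otimes\bB) = (d\bA)\otimes\bB + \bA\otimes(d\bB)$ together with the Magnus-Neudecker identity
\begin{align*}
\text{vec}(\bA\otimes\bB) \;=\; (\bI_n\otimes\bK_{nn}\otimes\bI_n)\bigl(\text{vec}\bA\otimes\text{vec}\bB\bigr),
\end{align*}
applied to each of the two summands that arise when $d\bOmega^{-1}$ is substituted. The two resulting terms, one per copy of $\bOmega^{-1}$ that gets differentiated, produce the summands $\bOmega^{-1}\otimes\bOmega^{-1}\otimes\text{vec}(\bOmega^{-1})$ and $\text{vec}(\bOmega^{-1})\otimes\bOmega^{-1}\otimes\bOmega^{-1}$; the overall minus sign is inherited from the two copies of $-\bOmega^{-1}(d\bOmega)\bOmega^{-1}$. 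The only nonroutine bookkeeping is tracking the action of the commutation-matrix prefactor $(\bI_n\otimes\bK_{nn}\otimes\bI_n)$ through the product rule so that the two summands land in the symmetric form stated in the lemma.

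Part 2 is essentially free once Part 1 is in hand. Because the parameterization $\bx\mapsto\bOmega(\bx)$ is \emph{affine}, the Jacobian $D\bOmega(\bx)$ is a constant selector matrix and every higher derivative of $\bOmega$ with respect to $\bx$ vanishes. The Magnus-Neudecker chain rule therefore truncates at each order: one application gives $Df(\bx) = Df(\bOmega)\,D\bOmega(\bx)$, a second gives $D^2f(\bx) = [D\bOmega(\bx)]^t\,D^2f(\bOmega)\,D\bOmega(\bx)$, and a third, combined with the Kronecker form of the chain rule, yields the displayed expression for $D^3f(\bx)$. No cross terms appear precisely because $D^2\bOmega(\bx) = 0$.
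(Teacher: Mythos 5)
Your derivation is correct, and it is exactly the ``straightforward matrix calculus'' the paper alludes to when it omits the proof for brevity: the differentials $d\log\det\bOmega=\text{tr}(\bOmega^{-1}d\bOmega)$ and $d\bOmega^{-1}=-\bOmega^{-1}(d\bOmega)\bOmega^{-1}$ for Part 1, the Magnus--Neudecker identity $\text{vec}(\bA\otimes\bB)=(\bI_n\otimes\bK_{nn}\otimes\bI_n)(\text{vec}\bA\otimes\text{vec}\bB)$ for the third-order term, and the chain rule truncated by the affinity of $\bx\mapsto\bOmega(\bx)$ for Part 2. All steps check out, including the sign bookkeeping (one factor of $-\bOmega^{-1}(d\bOmega)\bOmega^{-1}$ in each of the two summands, giving the common overall minus).
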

\noindent
When the covariance matrix $\hat \bSigma$ is nonsingular, EPSODE can be initiated either at $\rho=0$ or $\rho=\infty$. When $\hat \bSigma$ is singular, we start from $\rho=\infty$ and the extended version of EPSODE (\ref{eqn:sol-ode-alt}) should be used. If starting at $\rho=0$, the solution is initialized at $\hat \bSigma^{-1}$; If starting at $\rho=\infty$, the solution is initialized at $\text{diag}(\hat \sigma_{ii}^{-1})$.  Minimization of both the unpenalized and penalized objective function has to be performed over the convex cone of symmetric, positive semidefinite matrices, which is not explicitly incorporated in our path following algorithm. The next result ensures the positive definiteness of the path solution.
\begin{lemma}[Positive definiteness along the path]
The path solution $\bOmega(\rho)$ minimizes (\ref{eqn:glasso-obj}) over the convex cone of symmetric, positive semidefinite matrices.
\end{lemma}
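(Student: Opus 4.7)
The approach I would take leverages the barrier property of the log-determinant term. First I would note that as $\bOmega$ approaches the boundary of the positive semidefinite cone from the interior, its smallest eigenvalue tends to zero, so $\det\bOmega\to 0$ and hence $f(\bOmega) = -\log\det\bOmega + \text{tr}(\hat\bSigma\bOmega) \to +\infty$. Adding the nonnegative penalty $\rho\sum_{i<j}|\omega_{ij}|$ only makes the objective in (\ref{eqn:glasso-obj}) larger, so the entire penalized criterion is coercive on the open positive definite cone and effectively $+\infty$ on its boundary. Moreover, $f$ is strictly convex on the open cone since its Hessian $\bOmega^{-1}\otimes\bOmega^{-1}$ is positive definite, so the hypotheses of Lemma~\ref{lemma:continuity} are in force and the minimizer path $\bOmega(\rho)$ is unique and continuous in $\rho$.

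Next I would observe that path following is initialized strictly inside the positive definite cone: both $\hat\bSigma^{-1}$ (when starting at $\rho=0$ with $\hat\bSigma$ nonsingular) and $\text{diag}(\hat\sigma_{ii}^{-1})$ (when starting at $\rho=\infty$) are positive definite by construction. The plan is then to argue by contradiction and continuity that $\bOmega(\rho)$ cannot escape the open positive definite cone as $\rho$ varies.

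Suppose to the contrary that there were a first value $\rho^*$ at which $\bOmega(\rho^*)$ reaches the boundary of the PSD cone, so $\det\bOmega(\rho^*)=0$. Fix any reference $\bOmega_0$ in the open positive definite cone (for instance, the initial iterate). For every $\rho$ on a compact interval containing $\rho^*$, optimality of $\bOmega(\rho)$ yields
\[
    f(\bOmega(\rho)) + \rho\sum_{i<j}|\omega_{ij}(\rho)| \le f(\bOmega_0) + \rho\sum_{i<j}|\omega_{0,ij}|,
\]
and since the penalty terms are nonnegative, this bounds $f(\bOmega(\rho))$ uniformly from above on that compact set. On the other hand, by continuity of the determinant together with continuity of $\bOmega(\rho)$, $-\log\det\bOmega(\rho)\to+\infty$ as $\rho\to\rho^*$, forcing $f(\bOmega(\rho))\to+\infty$, which contradicts the uniform bound above.

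Thus $\bOmega(\rho)$ remains strictly positive definite for every $\rho$ traversed, and because the objective takes the value $+\infty$ on the boundary of the PSD cone, minimizing over the open cone coincides with minimizing over the entire closed cone. The main subtlety is ensuring the preconditions of Lemma~\ref{lemma:continuity} — strict convexity and coerciveness — really do hold throughout; both are immediate from the barrier and Hessian structure of $-\log\det$, after which positive definiteness propagates automatically from the initial iterate along the continuous path.
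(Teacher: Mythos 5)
Your proof is correct and takes essentially the same route as the paper's: both arguments rest on the barrier property of $-\log\det$ combined with the optimality of the path solution, which keeps the objective finite and hence forces $\det\bOmega(\rho)>0$. You merely spell out the continuity and first-boundary-crossing details (and the comparison bound against a fixed reference point) that the paper's two-line proof leaves implicit.
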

\begin{proof}
Both symmetry and the stationarity condition (\ref{path_stationary}) are preserved by the path following. Therefore the path solution $\bOmega(\rho)$ constitutes a minimum of the penalized objective function (\ref{eqn:glasso-obj}). This implies that $\text{det}(\bOmega(\rho))>0$ otherwise $f(\bOmega(\rho))=\infty$, contradicting with the optimality of $\bOmega(\rho)$.
\end{proof}

We illustrate the path algorithm by the classical example of 88 students' scores on five math courses -- mechanics, vector, algebra, analysis, and statistics \cite[Table 1.2.1]{MardiaKentBibby79Book}. Figure \ref{fig:MKBscore_lassopah} displays the solution path from EPSODE. The top three edges chosen by lasso are analysis-algebra, statistics-algebra, and algebra-vector, matching the findings in \citep{Yuan08GraphLasso}.

\begin{figure}
$$
\begin{array}{c}
\includegraphics[width=3.5in]{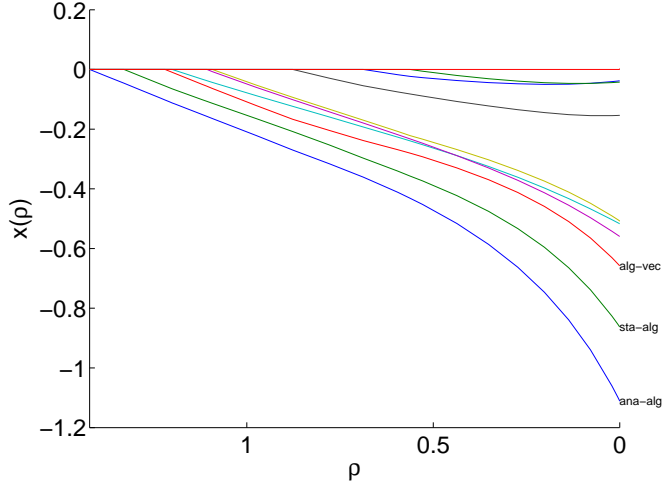}
\end{array}
$$
\caption{Solution path of the 10 edges in lasso-regularized Gaussian graphical model for the math score data. The top three edges chosen by lasso are labeled.}
\label{fig:MKBscore_lassopah}
\end{figure}

\subsection{Nonparametric Density Estimation}

As part of the trend of statistics shifting from parametric models to semi- or non-parametric models, nonparametric density estimation has attracted much attention in recent years. The maximum likelihood estimation for nonparametric density estimation often involves a nontrivial, high-dimensional constrained optimization problem. In this section, we briefly demonstrate the applicability of EPSODE to the maximum likelihood estimation of univariate log-concave density. Extensions to multivariate log-concave density estimation \citep{CuleGramacySamworth2009LogConcDEAD,CuleGramacySamworth2010LogConc} and $k$-monotone density estimation \citep{BalabdaouiWellner07kMono} will be pursued elsewhere. Some algorithms have been specifically crafted for log-concave density estimation, e.g., the iterative convex minorant algorithm (ICMA) \citep{GroeneboomWellner92Nonparam,Jongbloed98ICMA} and more recently an active-set algorithm \citep{DuembgenRufibachHuesler07LogConc}. It is noteworthy that, besides providing an alternative solver for log-concave density estimation, EPSODE offers the whole solution path between the unconstrained and constrained solutions. For example, an ``almost" log-concave density estimate in the middle of the path can be chosen that minimizes cross-validation or prediction error. This adds another dimension to the flexibility of nonparametric modeling.

The family of log-concave densities is an attractive modeling tool. It includes most of the commonly used parametric distributions as special cases. Examples include normal, gamma with shape parameter $\ge 1$, and beta densities with both parameters $\ge 1$. The survey paper \citep{Walther09Logconc} gives a recent review. A probability density $g(\cdot)$ on $\mathbb{R}$ is log-concave if its logarithm $\phi(x) = \ln g(x)$ is concave. Given iid observations, from an unknown distribution of density $g(\cdot)$, with support at points $x_1 < \ldots < x_n$ with corresponding frequencies $p_1, \ldots, p_n$, it is well-known \citep{Walther02Multiscale} that the nonparametric MLE of  $g$ exists, is unique and takes the form $\hat g = \exp (\hat \phi)$ where $\hat \phi$ is continuous and piecewise linear on $[x_1,x_n]$, with the set of knots contained in $\{x_1,\ldots,x_n\}$, and $\hat \phi = - \infty$ outside the interval $[x_1,x_n]$. This implies that the MLE is obtained by minimizing the strictly convex function
\begin{align*}
    f(\bphi) = - \sum_{i=1}^n p_i \phi_i + \sum_{k=1}^{n-1} (x_{k+1}-x_k) \int_0^1 e^{(1-t) \phi_k+t \phi_{k+1}} \, dt.
\end{align*}
over $\bphi=(\phi_1, \phi_2, \cdots, \phi_n)^t \in \mathbb{R}^n$ subject to constraints
\begin{align*}
    \frac{\phi_{i+1} - \phi_i}{x_{i+1} - x_i} \le \frac{\phi_{i} - \phi_{i-1}}{x_{i} - x_{i-1}}, \hspace{.1in} i=2,\ldots,n-1.
\end{align*}
The consistency of the MLE is proved by \cite{PalWoodroofeMeyer07Polya} and the pointwise asymptotic distribution of the MLE studied in \citep{BalabdaouiRufibachWellner09Logconc}.

Following \cite{DuembgenRufibachHuesler07LogConc}, we use notations
\begin{align*}
    \delta_0 &= \delta_{n} = 0, \, \delta_i = x_{i+1} - x_i, \hspace{.1in} i=1,\ldots,n-1   \\
    J(r,s) &= \int_0^1 e^{(1-t)r+ts} \, dt = \begin{cases}
    \frac{e^s - e^r}{s-r} & r \ne s \\
    e^r & r=s
    \end{cases}.
\end{align*}
Then the objective function becomes
\begin{align*}
    f(\bphi) = - \sum_{i=1}^n p_i \phi_i + \sum_{k=1}^{n-1} \delta_k J(\phi_k,\phi_{k+1}).
\end{align*}
The path algorithm requires up to the third derivative of the objective function $f$
\begin{align*}
    [\nabla f(\bphi)]_i &= - p_i + \delta_{i-1} J_{01}(\phi_{i-1},\phi_i) + \delta_{i} J_{10}(\phi_{i},\phi_{i+1}) \\
    [H(\bphi)]_{ij} &= [d^2f(\bphi)]_{ij}  \\
    &= \begin{cases}
    \delta_{i-1} J_{11}(\phi_{i-1},\phi_i) & j=i-1 \\
    \delta_{i-1} J_{02}(\phi_{i-1},\phi_i) + \delta_{i} J_{20}(\phi_{i},\phi_{i+1}) & j=i   \\
    \delta_{i} J_{11}(\phi_{i},\phi_{i+1}) & j=i+1    \\
    0 & \text{otherwise}
    \end{cases} \\
    \frac{\partial [H(\bphi)]_{i,i-1}}{\partial \phi_k} &= \begin{cases}
       \delta_{i-1} J_{21}(\phi_{i-1},\phi_i) & k=i-1 \\
       \delta_{i-1} J_{12}(\phi_{i-1},\phi_i) & k=i   \\
       0 & \text{otherwise}
    \end{cases} \\
    \frac{\partial [H(\bphi)]_{i,i}}{\partial \phi_k} &= \begin{cases}
       \delta_{i-1} J_{12}(\phi_{i-1},\phi_i) & k=i-1 \\
       \delta_{i-1} J_{03}(\phi_{i-1},\phi_i) + \delta_{i} J_{30}(\phi_{i},\phi_{i+1}) & k=i   \\
       \delta_{i} J_{21}(\phi_{i},\phi_{i+1}) & k=i+1   \\
       0 & \text{otherwise}
    \end{cases} \\
    \frac{\partial [H(\bphi)]_{i,i+1}}{\partial \phi_k} &= \begin{cases}
       \delta_{i} J_{21}(\phi_{i},\phi_{i+1}) & k=i \\
       \delta_{i} J_{12}(\phi_{i},\phi_{i+1}) & k=i+1   \\
       0 & \text{otherwise}
    \end{cases}.
\end{align*}
Interchanging the derivative and integral operators, justified by the dominated convergence theorem, gives a useful representation for the partial derivatives of $J$
\begin{align*}
    J_{ab}(r,s) &= \frac{\partial^{a+b}}{\partial r^a \partial s^b} J(r,s) = \int_0^1 (1-t)^a t^b e^{(1-t)r + ts} \, dt.
\end{align*}
We derive a recurrence relation for $J_{ab}(r,s)$ to facilitate its computation.
\begin{lemma}
\label{lemma:Jab}
$J_{ab}(r,s)$ satisfy following recurrence
\begin{enumerate}
\item For $r \ne s$,
\begin{align*}
    J_{00}(r,s) &= \frac{e^s - e^r}{s-r} \\
    J_{10}(r,s) &= - \frac{e^r}{s-r} + \frac{e^s-e^r}{(s-r)^2} \\
    J_{01}(r,s) &= \frac{e^s}{s-r} - \frac{e^s-e^r}{(s-r)^2} \\
    J_{11}(r,s) &= \frac{e^s+e^r}{(s-r)^2} - \frac{2(e^s-e^r)}{(s-r)^3}  \\
    J_{ab}(r,s) &= \frac{a+b+s-r}{s-r} J_{a-1,b}(r,s) - \frac{a-1}{s-r} J_{a-2,b}(r,s)  \\
    J_{ab}(r,s) &= - \frac{a+b-s+r}{s-r} J_{a,b-1}(r,s) + \frac{b-1}{s-r} J_{a,b-2}(r,s).
\end{align*}
\item For $r=s$,
\begin{align*}
    J_{ab}(r,s) = \frac{e^r a! b!}{(a+b+1)!} = \frac{a}{a+b+1} J_{a-1,b} = \frac{b}{a+b+1} J_{a,b-1}.
\end{align*}
\end{enumerate}
\end{lemma}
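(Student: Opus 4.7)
The plan is to split the argument into the $r=s$ and $r\neq s$ cases and, for each, separate the base-case computations from the recurrences. Throughout, the integrand $(1-t)^a t^b e^{(1-t)r+ts}$ is smooth and bounded on $[0,1]$, so differentiation under the integral sign is justified, giving the key identity
\begin{align*}
    J_{ab}(r,s) = \frac{\partial^{a+b}}{\partial r^a \partial s^b} J_{00}(r,s).
\end{align*}

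For the collapsed case $r=s$, the integrand reduces to $e^r(1-t)^a t^b$, which integrates to the Beta function $e^r B(a+1,b+1) = e^r a!\, b! / (a+b+1)!$. The stated chain of equalities $\frac{a}{a+b+1} J_{a-1,b} = \frac{b}{a+b+1} J_{a,b-1}$ is then a one-line factorial manipulation.

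For $r \neq s$, I would first compute $J_{00}(r,s) = (e^s-e^r)/(s-r)$ by direct integration, and then obtain $J_{10}, J_{01}, J_{11}$ by partial differentiation in $r$ and $s$. These routine derivatives supply the four explicit base cases. The main work is the third-order recurrence on $a$. I would derive it in two steps. First, integration by parts with $u=(1-t)^a t^b$ and $dv = e^{(1-t)r+ts}\,dt$ (the boundary term vanishes for $a,b \ge 1$) yields the basic mixed recurrence
\begin{align*}
    (s-r)J_{ab}(r,s) = a\,J_{a-1,b}(r,s) - b\,J_{a,b-1}(r,s).
\end{align*}
Applied with $a$ replaced by $a-1$, it also gives $(s-r)J_{a-1,b} = (a-1)J_{a-2,b} - b\,J_{a-1,b-1}$. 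To eliminate $J_{a-1,b-1}$, I would use the elementary algebraic identity $(1-t)^{a-1} = (1-t)^a + t(1-t)^{a-1}$, which integrates to $J_{a-1,b-1} = J_{a,b-1} + J_{a-1,b}$. Substituting this into the shifted IBP identity produces $b\,J_{a,b-1} = (a-1)J_{a-2,b} - (s-r+b)J_{a-1,b}$, and plugging this back into the original IBP identity and simplifying gives exactly the claimed recurrence.

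The companion recurrence in $b$ follows by the symmetry $J_{ab}(r,s) = J_{ba}(s,r)$, which is immediate from the substitution $t \mapsto 1-t$ in the defining integral; applying the first recurrence at $(s,r,b,a)$ and rewriting in terms of $(r,s,a,b)$ flips the sign of $s-r$ and produces the stated $-$ and $+$ signs. The one step I expect to require care is the algebraic bookkeeping in combining the two IBP identities with the $(1-t)^{a-1} = (1-t)^a + t(1-t)^{a-1}$ identity, since a sign slip there would spoil the $(a+b+s-r)$ coefficient; everything else is routine differentiation and substitution.
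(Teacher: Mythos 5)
Your proof is correct, but it takes a genuinely different route from the paper's. The paper identifies $J_{ab}(r,s)$ with a Kummer confluent hypergeometric function, $J_{ab}(r,s) = e^r B(a+1,b+1)\,{}_1F_1(b+1,a+b+2\mid s-r)$, and then imports the known three-term contiguous recurrence for ${}_1F_1$ in its second parameter, together with the symmetry $J_{ab}(r,s)=J_{ba}(s,r)$; the $r=s$ case drops out of the Beta-function prefactor. You instead derive everything from scratch: the integration-by-parts identity $(s-r)J_{ab}=aJ_{a-1,b}-bJ_{a,b-1}$, its shift to $(a-1,b)$, and the splitting $J_{a-1,b-1}=J_{a,b-1}+J_{a-1,b}$ combine (I checked the algebra) to give exactly the stated coefficient $(a+b+s-r)/(s-r)$, and the companion $b$-recurrence follows from the same symmetry the paper uses. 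Your argument is more elementary and self-contained — a reader need not know the ${}_1F_1$ contiguous relations — at the cost of more bookkeeping; the paper's is a two-line reduction once the special-function identification is made. One small patch is needed in your version: the recurrence is applied in the paper with $b=0$ (e.g.\ for $J_{20}$ and $J_{30}$), where the boundary term in your integration by parts does not vanish at $t=0$ but contributes $-e^r/(s-r)$. The extra boundary terms from the two IBP applications cancel against each other, so the final recurrence still holds verbatim; you should either note this cancellation or handle the $b=0$ (and, for the $b$-recurrence, $a=0$) case separately.
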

\begin{proof}
We recognize $J_{ab}(r,s)$ as the Kummer confluent hypergeometric function multiplied by a constant
\begin{align*}
    J_{ab}(r,s) &= e^rB(a+1,b+1) \sum_{k=0}^\infty \frac{(b+1)_{(k)}}{(a+b+2)_{(k)}} \frac{(s-r)^k}{k!}  \\
    &= e^rB(a+1,b+1) \, _1F_1(b+1,a+b+2 \mid s-r) \\
    &= e^sB(a+1,b+1) \, _1F_1(a+1,a+b+2 \mid r-s)  \\
    &= \frac{e^sa!b!}{(a+b+1)!} \, _1F_1(a+1,a+b+2 \mid r-s).
\end{align*}
Then the results follow from the well-known recurrence relation for Kummer hypergeometric function
\begin{align*}
    \,_1F_1(x,y\mid z) = \frac{(1-y)(y+z-2)}{(x-y+1)z} \,_1F_1(x,y-1\mid z) + \frac{(1-y)(2-y)}{(x-y+1)z} \,_1F_1(x,y-2 \mid z).
\end{align*}
and symmetry $J_{ab}(r,s) = J_{ba}(s,r)$.
\end{proof}
\noindent

To illustrate the path algorithm for this problem, we simulate $n=25$ points from the extremal distribution Gumbel(0,1). Figure \ref{fig:gumbel} displays the constrained and unconstrained estimates of $\phi_i$ and the solution path bridging the two.
\begin{figure}
$$
\begin{array}{cc}
\includegraphics[width=2.5in]{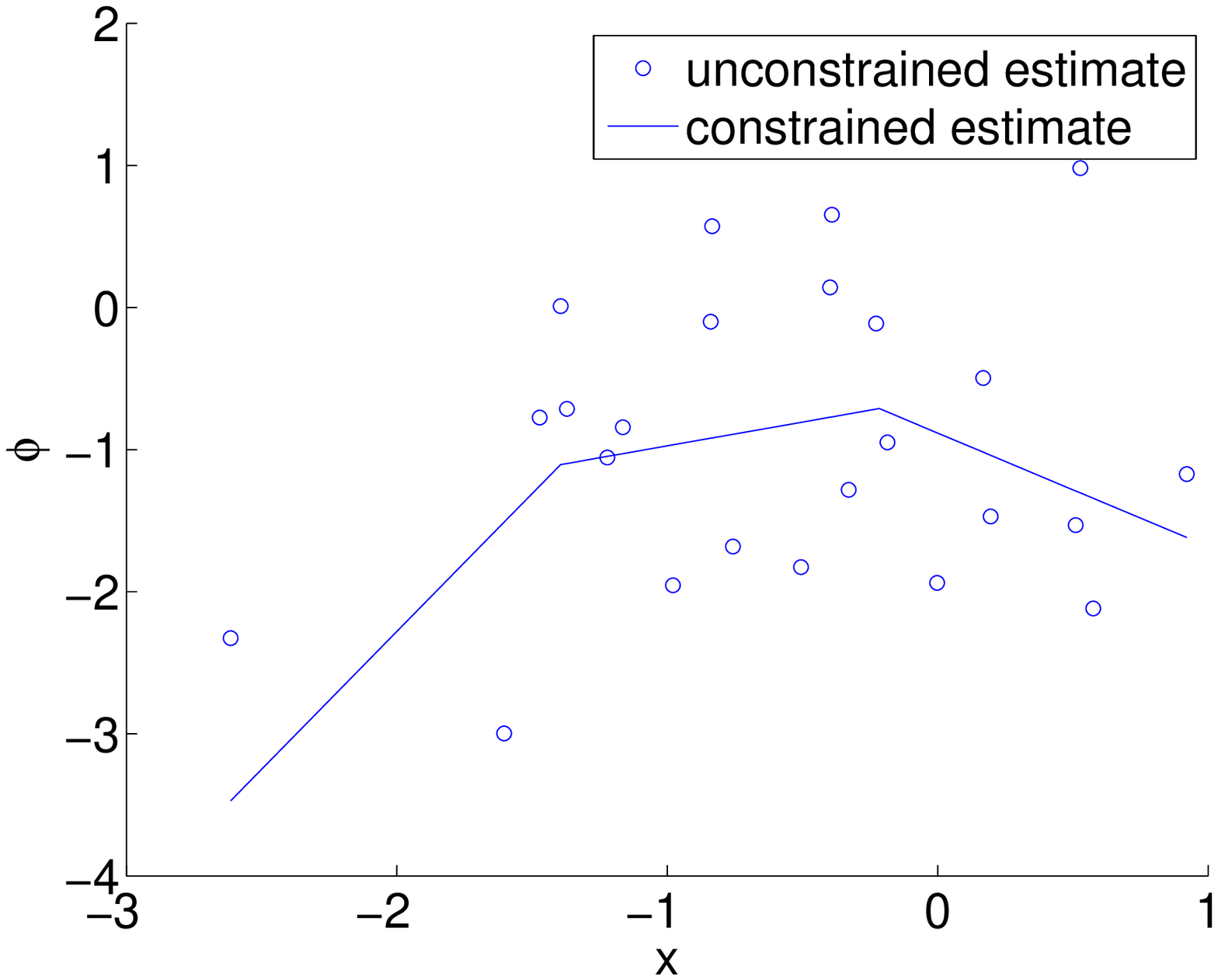} &
\includegraphics[width=2.5in]{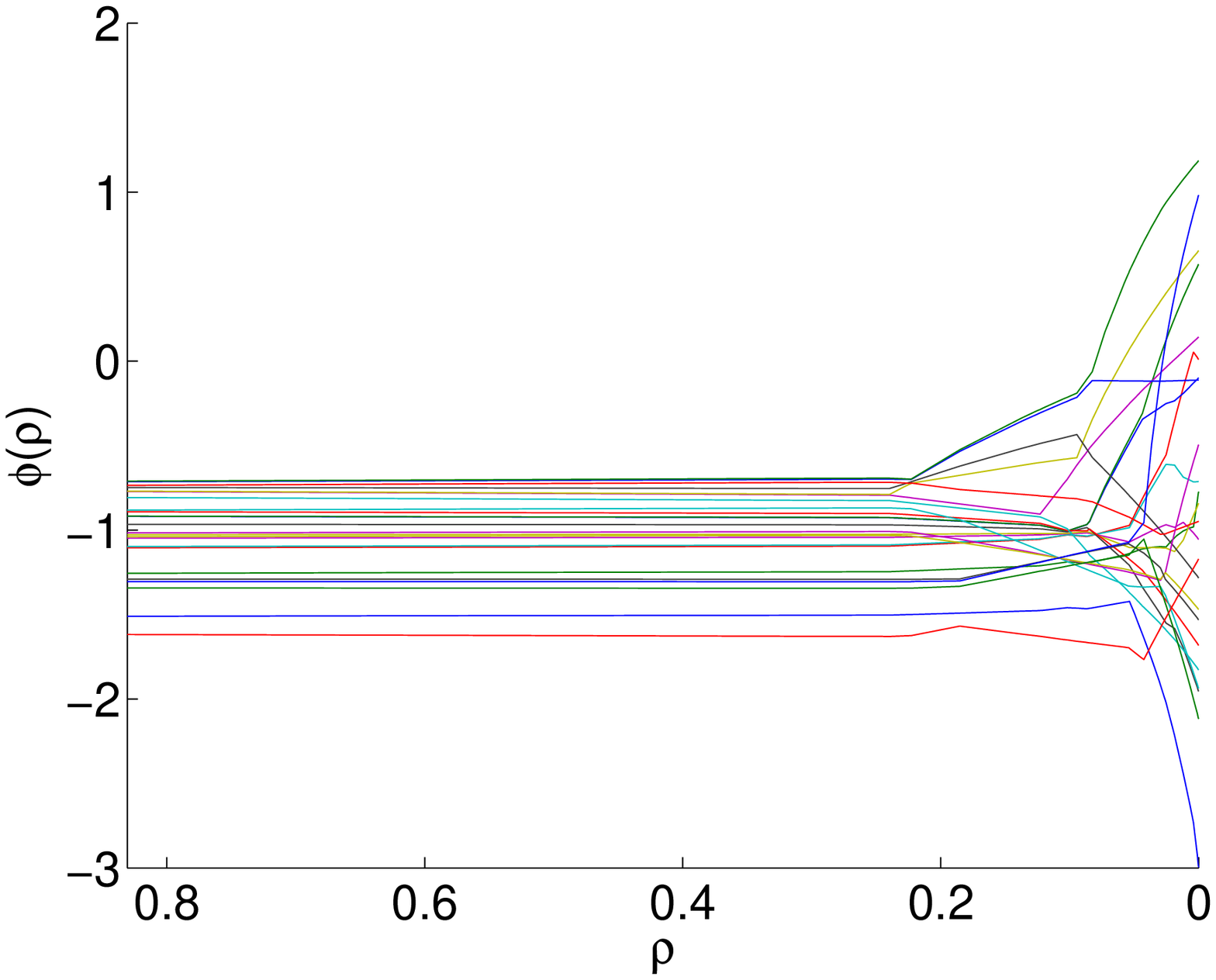}   \\
\includegraphics[width=2.5in]{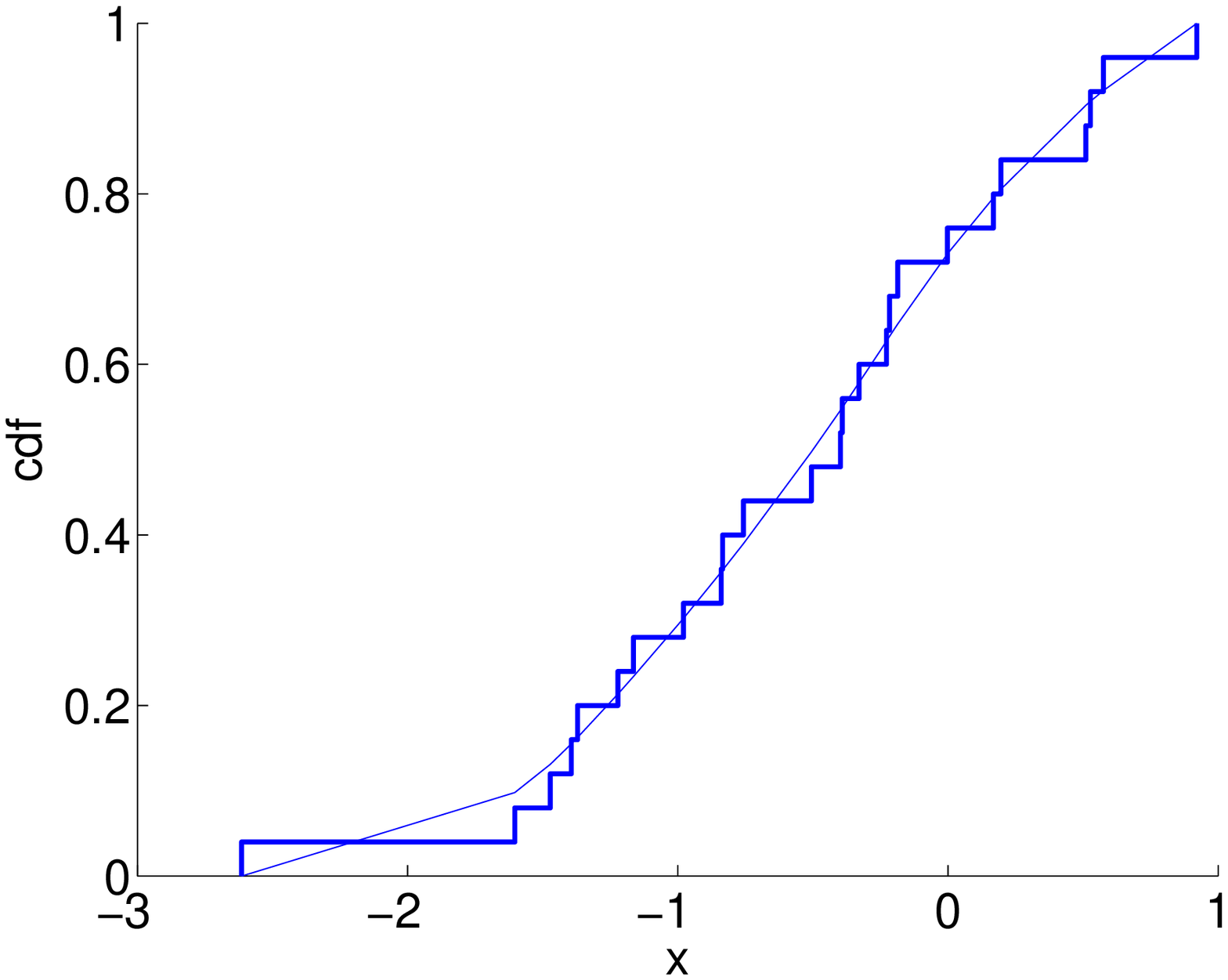}
\end{array}
$$
\caption{Log-concave density estimation. $n=25$ points are generated from Gumbel(0,1) distribution. Top left: Unconstrained and concavity-constrained estimates $\phi$. Top right: Solution path. Bottom left: Empirical cdf and the cdf of MLE density.}
\label{fig:gumbel}
\end{figure}

\section{Conclusions}
\label{sec:conclusions}

In this article we propose a generic path following algorithm EPSODE that works for any regularization problems of form (\ref{eqn:pathalgo-obj}). The advantages are its simplicity and generality. Path following only involves solving ODEs segment by segment and is simple to implement using popular softwares such as {\tt R} and {\tt Matlab}. Besides providing the whole regularization path, it also gives a solver for linearly constrained optimization problems that frequently arise in statistics. Our applications to shape-restricted regressions and nonparametric density estimation are special cases in particular.

At least two extensions deserve further study. Current algorithm requires sufficient smoothness (twice differentiable) in the loss function. This precludes certain applications with non-smooth objective function, e.g., the Huber loss in robust estimation and the loss function in quantile regression. Generalization of our path algorithm to regularization of these loss functions requires further research. Another restriction in our formulation is the linearity in the regularization terms. In sparse regressions, several authors have proposed nonlinear and  non-convex penalties. The bridge regression \citep{FrankFriedman93Bridge,Fu98BridgeLasso} and SCAD penalties \citep{FanLi01SCAD} fall into this category. As observed in \citep{Friedman08GPS}, when the penalty is not convex, the solution path may not be continuous and poses difficulty in path following, which strongly depends on the continuity and smoothness of the solution path. Fortunately, in these problems, the discontinuities only occur when new variables enter or leave the model. A promising strategy is to initialize the starting point of next segment by solving an equality constrained optimization problem. This again invites further investigation.

\bibliography{../../../bib-HZ}
\bibliographystyle{Chicago}

\end{document}